\pgfplotsset{compat=newest} 
\def\BibTeX{{\rm B\kern-.05em{\sc i\kern-.025em b}\kern-.08em
		T\kern-.1667em\lower.7ex\hbox{E}\kern-.125emX}}
\newcommand{\R}{\mathbb{R}}
\newcommand{\N}{\mathbb{N}}
\newcommand{\C}{\mathcal{C}}
\newcommand{\V}{\mathcal{V}}
\newcommand{\W}{\mathcal{W}}
\newcommand{\B}{\mathcal{B}}
\newcommand{\Q}{\mathcal{Q}}
\newtheorem{theorem}{Theorem}
\newtheorem{assumption}{Assumption}
\newtheorem{proposition}{Proposition}
\newtheorem{lemma}{Lemma}
\newtheorem{definition}{Definition}
\newtheorem{remark}{Remark}
\newcommand{\norm}[1]{\left\lVert#1\right\rVert}
\newcommand{\abs}[1]{\left\lvert#1\right\rvert}
\begin{document}
	\tikzstyle{block} = [draw, rectangle, 
	minimum height=3em]
	\tikzstyle{sum} = [draw, circle, scale = 0.5,node distance = 1cm]
	\tikzstyle{product} = [draw, circle, cross, minimum width=1 cm, scale = 0.5]
	\tikzstyle{input} = [coordinate]
	\tikzstyle{output} = [coordinate]
	\tikzstyle{pinstyle} = [pin edge={to-,thin,black}]
	\tikzstyle{gain} gain  = [draw, thick, isosceles triangle, minimum height = 2em, isosceles triangle apex angle=60, scale = 0.8]
	\bibliographystyle{IEEEtranS}

	\title{Newton Nonholonomic Source Seeking\\ for Distance-Dependent Maps}
	\author{Velimir Todorovski,  Miroslav Krsti\'{c}, \IEEEmembership{Fellow, IEEE}
		\thanks{Velimir Todorovski is with the Faculty of Electrical and Computer Engineering, Technical University of Munich, 80333 Munich, Germany (e-mail: velimir.todorovski@tum.de).}
		\thanks{Miroslav Krstic is with the Department of Mechanical and Aerospace Engineering University of California at San Diego, La Jolla, CA 92093 (e-mail: krstic@ucsd.edu).}}
	
	\maketitle
	
	\begin{abstract}
		The topics of source seeking and Newton-based extremum seeking have flourished, independently, but never combined. We present the first Newton-based source seeking algorithm. The algorithm employs forward velocity tuning, as in the very first source seeker for the unicycle, and incorporates an additional Riccati filter for inverting the Hessian inverse and feeding it into the demodulation signal. Using second-order Lie bracket averaging, we prove convergence to the source at a rate that is independent of the unknown Hessian of the map. The result is semiglobal and practical, for a  map that is quadratic in the distance from the source. The paper presents a theory and simulations, which show advantage of the Newton-based over the gradient-based source seeking. 
	\end{abstract}
	
	%\begin{IEEEkeywords}
	%Enter key words or phrases in alphabetical 
	%order, separated by commas. For a list of suggested keywords, send a blank 
	%e-mail to keywords@ieee.org or visit \underline
	%{http://www.ieee.org/organizations/pubs/ani\_prod/keywrd98.txt}
	%\end{IEEEkeywords}
	
	\section{Introduction}
	\label{sec:introduction}
	
	%	The initial ES schemes \cite{ES_book} were significantly influenced by the unknown Hessian, which limited their convergence speed. However, substantial progress has been made in overcoming this limitation through the development of Newton-based ES methods . This approach decouples the convergence rate from the Hessian, enabling the assignment of arbitrary convergence rates by appropriately selecting the gain. As a result, there have been numerous applications and theoretical extensions that have greatly improved the convergence speed and overall performance of ES algorithms. 
	%	
	
	We present the first Newton-based source seeking algorithm, making the convergence rate invariant to the unknown Hessian of the map. This result has been sought for well over a decade (since \cite{6160495}) but was unattainable prior to the theoretical breakthrough in \cite{labar2019newton}.
	
	\textit{Background:}
	The initial extremum seeking (ES) schemes \cite{ES_book} were gradient based. As such, while convergent, their convergence rate was unknown to the user and unassignable, since it depended on the unknown Hessian of the map. This limitation was overcome with the Newton-based ES algorithm \cite{nesic2010newton,newton_ES}. These algorithms remove the Hessian from the linearization of the average system, enabling the assignment of arbitrary convergence rates by appropriately selecting the gain. Due to this advantage, various extensions and applications have been proposed such as stochastic Newton-based ES \cite{stochastic_newton}, Newton-based ES with different time delays \cite{damir_newton}, \cite{damir_newton2}, Newton-based ES for partial differential equations \cite{PDE_newton}, fixed-time Newton-based ES schemes \cite{fixed_newton} and optimization of power for photovoltaic microconverters \cite{power_newton}.
	
	Several years prior to the efforts in Newton-based ES, numerous source seeking algorithms began to emerge \cite{first_ss,zhang2007source,cochran2009nonholonomic,ES_with_bounded_update_rates,liu2010stochastic,lin2017stochastic,suttner2019extremum,suttner2020acceleration,suttner_torque,ss_planar,suttner2023nonlocal,9782675,abdelgalil2022sea}, where the primary focus was designing algorithms that ensure convergence to the source using a (kinematically) underactuated, non-holonomic unicycle, with limited attention given to the convergence speed.
	This issue of convergence speed was somewhat addressed in our work \cite{todorovski2023practical}, with an algorithm that ensures convergence to the source in {\em prescribed time}, despite the influence the unknown Hessian, by introducing time-varying gains. However, practical implementation of this approach remains a challenge, due to the need for the vehicle speed to grow. Hence, all the source seeking algorithms in existence have convergence rates that  depend on the unknown Hessian and are not user-assignable.
	
	\textit{Contribution:} In this paper, we begin to fill the need for Newton source seekers. Building on the initial source seeking scheme in \cite{zhang2007source}, our approach incorporates a Newton-based modification that tunes the forward velocity while keeping the angular velocity constant.
	Borrowing concepts from the Newton-based ES literature, we estimate the inverse of the Hessian of the map, on average, in a model-free manner via a Riccati filter. Effectively, this allows for  decoupling the convergence rate from the Hessian, enabling us to assign arbitrary convergence rates through the gain parameter. 
	
	We establish convergence to the source with the relatively recently pioneered second-order Lie bracket averaging \cite{labar2019newton}, guaranteeing practical asymptotic stability semiglobally. Finally, we validate our theory  through a simulation example.

		\textit{Organization:}	In Section \ref{sec:preliminaries}, we provide an overview of a second-order Lie bracket method, which is used for stability analysis of input-affine systems subjected to oscillatory inputs. We then proceed to Section \ref{sec:problem_statement}, where we define the general problem of unicycle source seeking and review an approach to tackle it in Section \ref{sec:gradient_source_seeking}.
		Our main contribution is presented in Section \ref{sec:newton_source_seeking}, where we propose a novel algorithm for Newton-based source seeking. We present the simulations results in Section \ref{sec:simulation}. 
		
		\textit{Notation:} The Euclidian norm is denoted as $\abs{\cdot}$ and class $\mathcal{K}$, $\mathcal{K}_{\infty}$ and $\mathcal{KL}$ functions are defined as in \cite{khalil2002nonlinear}. $\R_{>0}$ and $\mathbb{Q}_{>0}$ denote the sets of strictly positive real and rational numbers, respectively. \textbf{LCM}$(k_1,k_2,\ldots,k_n)$ stands for the Least Common Multiple of $(k_1,k_2,\ldots,k_n)$. The Jacobian of a differentiabe vector field $f: \R^n \rightarrow \R^n$ is denoted by $\frac{\partial f}{\partial x}$. The Lie bracket of two differentiable vector field $f: \R^n \rightarrow \R^n$ and $b: \R^n \rightarrow \R^n$, denoted by $[f,g]$ is defined by  $\frac{\partial g}{\partial x}f(x) - \frac{\partial f}{\partial x}g(x)$. A neighborhood of a point $x \in \R^n$ is a connected open set containing a ball centered in the point $x$.
		
		%Due to the space limitation, some computations are omitted, however they are provided in an arXiv supplementary document to this paper. A

		\section{Preliminaries: Second-order Lie bracket approximations }
		\label{sec:preliminaries}
		
		For the purpose of developing this paper's results, we start by calling to mind the method in \cite{labar2019newton} for stability analysis of control-affine systems with oscillatory inputs. 
		
		Consider a system of the form 
		\begin{equation}
			\dot{x}(t) =  f_0(x(t))  + \sum\limits_{i=1}^{l} f_i(x(t))\omega^{p_i} u_i(k_i \omega t) 	\label{eq:general_nominal_system}
		\end{equation}
		with the state $x \in \mathbb{R}^n$, the control inputs $u_i(t) \in \R$ and the parameters $\omega \in \R_{>0}$, $k_i \in \mathbb{Q}_{>0}$, $l \in \N$, $p_i \in (0,1)$. Furthermore, suppose that the following assumptions on $f_i$ and $u_i$ hold. 
		\begin{assumption}
			\label{ass:second_order_ass1}
			For all $i = 0,\ldots,l$ and $j = 1,\ldots,l$:
			\begin{enumerate}
				\item  $f_i \in \C^3: \mathbb{R}^n \times \mathbb{R}^n $.
				\item $u_j(t): \R \rightarrow \R$ is a measurable bounded function where it is assumed that $\sup_{t\in\R} \abs{u_j(t)}\le 1$.
				\item $u_j(t)$ is $2 \pi$-periodic, i.e., $u_j(t) = u_j(t+2\pi)$, $\forall t \in \R$ and has a zero mean on one period, i.e.,$\int_{0}^{2 \pi} u_j(t) \text{d}t = 0$.
			\end{enumerate}
		\end{assumption}
		
		\begin{assumption}
			\label{ass:second_order_ass2}
			Let $T_{i j}$ be a common period to $u_i(k_i \omega t)$ and $u_j(k_j \omega t)$ and let $T_{i j m}$ be a common period to $u_i(k_i \omega t)$, $u_j(k_j \omega t)$ and $u_m(k_m \omega t)$ with $i,j,m \in \{1,2,\ldots,l\}$. If $p_i + p_j > 1$, then $[f_i,f_j](x) = 0$, $\forall x \in \R^n$ or $\int_{0}^{T_{ij}} \int_{0}^{s} u_j(k_j\omega s) u_i(k_i \omega p) \text{d} p \text{d} s = 0$. Furthermore, if $p_i + p_j + p_m > 2$, then $[[f_i,f_j],f_m](x) = 0$, $\forall x \in \R^n$ or $\int_{0}^{T_{ijm}} u_m(k_m \omega \tau) \int_{0}^{T_{ij}} \int_{0}^{s} (u_j(k_j\omega s) u_i(k_i \omega p) - u_i(k_i\omega s) u_j(k_j \omega p) ) \text{d} p \text{d} s \text{d} \tau = 0$. Finally, if $p_i + p_j + p_m + p_q \ge 3$, then $\frac{\partial}{\partial x} \left(\frac{\partial}{\partial x} \left(\frac{\partial f_i(x)}{\partial x} f_j(x)\right)f_m(x)\right)f_q(x) = 0$, $\forall x \in \R^n$.
		\end{assumption}
		According to \cite[Lemma 2]{labar2019newton} the trajectories of \eqref{eq:general_nominal_system} are approximated by the so-called second-order Lie bracket system associated with \eqref{eq:general_nominal_system} that has the following form
		\begin{equation}
			\label{eq:second_order_LBS_general}
			\begin{aligned}[b]
				\dot{\bar{x}} = f_0(\bar{x}) &+ \lim\limits_{\omega \rightarrow \infty} \sum\limits_{\substack{1 \le i < l \\ i<j\le l}} [f_i,f_j] \gamma_{ij}(\omega)\\ &+ \lim\limits_{\omega \rightarrow \infty}\sum\limits_{\substack{1 \le i < l \\ i<j\le l}} \sum\limits_{m = 1}^l [[f_i,f_j],f_m] \gamma_{ijm}(\omega) 
			\end{aligned}
		\end{equation}
		where 
		\begin{eqnarray}
			\gamma_{ij}(\omega) &=& \frac{\omega^{p_i + p_j}}{T} \int\limits_{0}^{T} \int\limits_{0}^{s} u_j (k_j \omega s) u_i(k_i \omega p) \text{d} p \text{d} s
			\\
			\gamma_{ijm}(\omega) &= & \frac{\omega^{p_i + p_j + p_m}}{T} \int\limits_{0}^{\tau} \int\limits_{0}^{s} u_m (k_m \omega \tau) 
			\nonumber\\ && 
			\times\int\limits_{0}^{T} \Big[ u_j (k_j \omega s) u_i(k_i \omega p)  
			\nonumber\\ && 
			- u_i (k_i \omega s) u_j(k_j \omega p) \Big] \text{d} p \text{d} s \text{d} \tau 
		\end{eqnarray}
		with $T = \frac{2 \pi}{\omega}\text{\textbf{LCM}}(k_1^{-1},..,k_l^{-1})$.
		
		As \eqref{eq:second_order_LBS_general} is an approximation of \eqref{eq:general_nominal_system}, we can expect that both systems share some stability properties. Roughly stated, asymptotic stability of \eqref{eq:second_order_LBS_general} implies \textit{practical} asymptotic stability of \eqref{eq:general_nominal_system}. In this particular case, the notion of practical stability is defined as follows.
		\begin{definition}
			\label{def:sGPUAS} The origin of the $n$-dimensional system $\dot{x} = f(t,x,\epsilon)$ with the vector of parameters $\epsilon = [\epsilon_1,\epsilon_2,\ldots,\epsilon_{n_{\epsilon}}]^\top$, is semi-Globally Practically Uniformly Asymptotically Stable (\textbf{sGPUAS}) if for every bounded neighborhood of the origin $\B \subset \R^n$ and $\V \subset \R^n$, there exists bounded neighborhoods of the origin $\Q \subset \B$ and $\W: \V \subset \W \subset\R^n$ and an $\epsilon_1^* \in \R_{>0} $ such that $\forall \epsilon_1 \in (0, \epsilon_1^*)$, $\exists \epsilon_2^*$, such that $\forall \epsilon_2 \in (0, \epsilon_2^*) $,$\dots$,$\exists \epsilon_{n_{\epsilon}}^*$, such that $\forall \epsilon_{n_{\epsilon}}^* \in (0, \epsilon_{n_{\epsilon}}^*) $, $\exists t_1 \in \R$ such that $\forall t_0 \in \R$ the following holds:
			\begin{enumerate}
				\item[1.] (Boundedness) $x_0 \in \V \Rightarrow x(t) \in \W, \forall t\ge t_0$;
				\item[2.] (Stability) $x_0 \in \Q \Rightarrow x(t) \in \B, \forall t \ge t_0$;
				\item[3.] (Pract. Convergence) $x_0 \in \V \Rightarrow x(t) \in \B, \forall t \ge t_1 + t_0 $.
			\end{enumerate}
		\end{definition}
		Thus, \cite[Lemma 3]{labar2019newton} summarizes the stability properties of \eqref{eq:general_nominal_system} and \eqref{eq:second_order_LBS_general}, which for the sake of completeness we state here 
		\begin{lemma} 
			\label{lemma:sGPUAS}
			Under Assumptions \ref{ass:second_order_ass1} and \ref{ass:second_order_ass2}, if the origin of \eqref{eq:second_order_LBS_general} is globally uniformly asymptotically stable (\textbf{GUAS})\footnote{GUAS definition can be found in \cite[Lemma 4.5]{khalil2002nonlinear}.} with a vector of parameters $\epsilon = [\epsilon_1,\epsilon_2,\ldots,\epsilon_{n_{\epsilon}}]^\top$, then the origin of \eqref{eq:general_nominal_system} is \textbf{sGPUAS} as in Def. \ref{def:sGPUAS} with the vector parameters $[\epsilon^\top,\omega^{-1}]^\top$. 
		\end{lemma}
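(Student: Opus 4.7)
My plan is to treat Lemma \ref{lemma:sGPUAS} as a generalized averaging theorem and prove it by combining a trajectory-closeness estimate with a standard Lyapunov converse theorem. Since the lemma merely transcribes \cite[Lemma 3]{labar2019newton}, the job is to reconstruct the argument, which proceeds in two stages: (i) build a near-identity, time-periodic change of variables that transforms \eqref{eq:general_nominal_system} into a perturbation of the second-order Lie bracket system \eqref{eq:second_order_LBS_general}, and (ii) use a converse Lyapunov function for the GUAS average to absorb the perturbation on arbitrary compact sets.

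First I would mimic the Chen--Fliess (Magnus-type) expansion used in first-order Lie bracket averaging, but carried one further order. Concretely, I would seek a transformation of the form
\begin{equation*}
  x = \bar{x} + \omega^{-\min_i p_i}\,\phi_1(\omega t,\bar{x}) + \omega^{-2\min_i p_i}\,\phi_2(\omega t,\bar{x}),
\end{equation*}
where $\phi_1$ is chosen as the antiderivative (in the fast time $\tau=\omega t$) of $\sum_i f_i(\bar x)u_i(k_i\tau)$ with zero mean, and $\phi_2$ is chosen to eliminate the remaining first-harmonic content produced by $\frac{\partial \phi_1}{\partial \bar x}\sum_j f_j u_j$. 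Differentiating along \eqref{eq:general_nominal_system} and Taylor-expanding $f_i(\bar x + \phi_1 + \phi_2)$ to third order (permitted by $f_i \in \mathcal{C}^3$ in Assumption \ref{ass:second_order_ass1}), the $\omega^{p_i}$ factors are consumed by the integrations in $\tau$, producing exactly the bracket coefficients $\gamma_{ij}(\omega)$ and $\gamma_{ijm}(\omega)$ written in the statement. The sub-assumptions of Assumption \ref{ass:second_order_ass2} are then precisely the conditions needed so that the otherwise-unbounded terms of order $\omega^{p_i+p_j-1}$ or $\omega^{p_i+p_j+p_m-2}$ (which would blow up as $\omega\to\infty$) either vanish identically (zero bracket) or average out to zero (vanishing double/triple integral), while the terms with $p_i+p_j+p_m+p_q\ge 3$ are killed by the fourth Jacobian condition. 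What remains is
\begin{equation*}
  \dot{\bar x} = f_0(\bar x) + B_1(\bar x,\omega) + B_2(\bar x,\omega) + R(\omega t,\bar x,\omega),
\end{equation*}
with $B_1+B_2$ converging to the bracket terms in \eqref{eq:second_order_LBS_general} and $\|R\|=O(\omega^{-\alpha})$ uniformly on compact sets for some $\alpha>0$.

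Second, since the origin of the Lie bracket system \eqref{eq:second_order_LBS_general} is GUAS by hypothesis, a standard converse Lyapunov theorem (e.g., \cite[Thm.~4.17]{khalil2002nonlinear}) supplies a smooth $V$ with $\alpha_1(|\bar x|)\le V\le\alpha_2(|\bar x|)$, $\dot V\le -\alpha_3(|\bar x|)$, and bounded gradient on compacts. Evaluating $\dot V$ along the perturbed averaged system gives $\dot V \le -\alpha_3(|\bar x|) + c\,\omega^{-\alpha}$ on any prescribed compact, yielding an invariant sublevel set whose radius vanishes as $\omega\to\infty$. Inverting the near-identity transformation (which is $O(\omega^{-\min_i p_i})$-close to the identity) transfers this estimate to the original state $x$, and unpacking it in the nested quantifier order of Definition \ref{def:sGPUAS}---first picking tuning parameters $\epsilon_1,\ldots,\epsilon_{n_\epsilon}$ so that the GUAS $\mathcal{KL}$-bound traps trajectories in $\mathcal{B}$, and then choosing $\omega^{-1}$ small enough to make the perturbation harmless---produces the three conditions (Boundedness, Stability, Practical Convergence).

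The main obstacle is the bookkeeping in step (i): producing the precise coefficients $\gamma_{ij}$ and $\gamma_{ijm}$ requires carefully ordering the Lie brackets that arise from nested integration by parts in $\tau$ and arguing that every surviving $\omega^{p_i+p_j+\cdots}$ power is exactly matched by an integral whose magnitude is $O(\omega^{-(p_i+p_j+\cdots)})$ (rather than merely $O(\omega^{-1})$). This is where Assumption \ref{ass:second_order_ass2} does its real work, and where the second-order theory genuinely departs from the classical first-order Lie bracket averaging; the rest of the proof, once the expansion is in hand, is a standard perturbation-of-Lyapunov argument.
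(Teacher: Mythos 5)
The first thing to note is that the paper does not prove this lemma at all: it is stated ``for the sake of completeness'' as a verbatim import of \cite[Lemma 3]{labar2019newton}, so there is no in-paper proof to compare against. Judged on its own terms, your two-stage architecture---(i) show that trajectories of \eqref{eq:general_nominal_system} are, for large $\omega$, uniformly close on compact sets to those of a perturbed copy of \eqref{eq:second_order_LBS_general}, and (ii) absorb the $O(\omega^{-\alpha})$ perturbation using a converse Lyapunov function for the GUAS average, unwinding the nested quantifiers of Definition \ref{def:sGPUAS}---is exactly the right skeleton, and your reading of Assumption \ref{ass:second_order_ass2} as the mechanism that kills the terms whose $\omega$-powers would otherwise not be matched by decaying integrals is correct. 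One genuine difference of route: the proof in \cite{labar2019newton} (following the first-order template of the Lie-bracket-averaging literature) establishes closeness by iterated integration by parts of the variation-of-constants formula for the solution over finite time windows, rather than by constructing a near-identity periodic change of coordinates $x=\bar x+\omega^{-\min_i p_i}\phi_1+\cdots$; both are viable, but the coordinate-change route requires $\phi_1,\phi_2$ to be well defined and invertible uniformly on compacts, which is delicate when the $u_i$ are merely measurable and the exponents $p_i$ are heterogeneous (a single $\omega^{-\min_i p_i}$ scaling does not obviously normalize all the cross terms).

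The substantive gap is that stage (i) is announced but not executed. You state that the Taylor expansion ``produces exactly the bracket coefficients $\gamma_{ij}$ and $\gamma_{ijm}$'' and that the remainder is $O(\omega^{-\alpha})$ uniformly on compacts, but these are precisely the claims that constitute the theorem, and you concede in your closing paragraph that the bookkeeping needed to verify them---matching each surviving power $\omega^{p_i+p_j+\cdots}$ against an iterated integral of magnitude $O(\omega^{-(p_i+p_j+\cdots)})$, and checking that Assumption \ref{ass:second_order_ass2} disposes of every term that is not so matched---has not been done. Without that verification the argument is a proof plan, not a proof; in particular, nothing in the sketch rules out resonant cross terms between inputs with distinct rational frequency ratios $k_i/k_j$ (this is why the common periods $T_{ij}$, $T_{ijm}$ and the LCM in the definition of $T$ appear in the statement, and your sketch never engages with them). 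The Lyapunov half of the argument is standard and fine as described, provided one is careful that the time $t_1$ in Definition \ref{def:sGPUAS} is fixed \emph{before} $\omega$ is sent to infinity, which your quantifier ordering does respect.
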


		\section{Unicycle Source Seeking Problem}
		\label{sec:problem_statement}
		%		\begin{figure}
			%			\centering
			%			\scalebox{0.8}{
				%				\begin{tikzpicture}
					%					\begin{axis}[
						%						scale = 0.8,
						%						axis lines = left,
						%						xmin = 0,
						%						xmax = 1.5,
						%						ymin = 0,
						%						ymax = 1.5,
						%						axis line style={very thick},
						%						xlabel=$x_1$,
						%						ylabel=$x_2$,
						%						every axis x label/.style={at={(axis description cs:1,-0.03)},anchor=north},
						%						every axis y label/.style={at={(current axis.above origin)},anchor=north east},
						%						ticks = {none}]
						%						
						%						\begin{scope}[xshift = 1cm, yshift = -1cm]
							%							\draw[thick,rotate around={40:(0.75,0.75)}, fill=gray!10, rounded corners] (0.2, 1) rectangle (0.7,1.45);
							%						\end{scope}
						%						
						%						\begin{scope}[xshift = 0.5cm, yshift = 0.55cm]
							%							\draw[thick,rotate around={40:(0.75,0.75)}, fill=black, rounded corners] (0.2, 0.4) rectangle (0.35,0.5);
							%						\end{scope}
						%						
						%						\begin{scope}[xshift = -0.6cm, yshift = 1.85cm]
							%							\draw[thick,rotate around={40:(0.75,0.75)}, fill=black, rounded corners] (0.2, 0.4) rectangle (0.35,0.5);
							%						\end{scope}
						%						
						%						\draw [thick, dashed] (axis cs:0.8,0) node[xshift = 0.1cm,yshift = 0.7cm] {$\theta$} arc [radius=0.8,start angle=0,end angle=45];
						%						
						%						\draw[dashed,thick] (0,0) -- (0.55,0.55);
						%						
						%						\draw[->,ultra thick] (0.57,0.57) -- node[xshift = 0.1cm, yshift = 0.6cm] {$u_1$} (1,1);
						%					\end{axis}
					%				\end{tikzpicture}
				%			}
			%			\caption{ Model of a nonholonomic unicycle with forward velocity $u_1$ and orientation $\theta$.}
			%			\label{fig:unicycle}
			%		\end{figure}
		
		Let the nonholonomic unicycle be modeled by % in Fig. \ref{fig:unicycle}
		\begin{subequations}
			\label{eq:unicycle_model}
			\begin{align} 
				\dot{x} &= u_1\begin{bmatrix}
					\cos(\theta) \\
					\sin(\theta)
				\end{bmatrix} 
				\label{eq:unicycle_with_drift-a}
				\\
				\dot{\theta} &= u_2 \\
				y  &= F(x)
			\end{align}
		\end{subequations}
		where $x = [x_1,x_2]^{\top} \in \R^2$ is the position of the vehicle's center with  $x(t_0) = x_0$, $\theta \in \R$ is the orientation with $\theta(t_0) = \theta_0$ and $u_1(t), u_2(t) \in \R$ are the forward and angular velocity inputs, respectively. Let the measurement $y \in \R$ of the field $F(x)$ at position $x$ of the vehicle be available. 
		
		In relation to developing a Newton source seeker, in this note we assume that the signal field $F(x)$ decays with the distance from the source. Electromagnetic and acoustic signals  (in the absence of reflections) have this property. In view of our goal for a Newton strategy, and of freeing only the local convergence of the unknown Hessian, we make the following simplifying assumption. 
		
		\begin{assumption}[Distance-dependent field]
			\label{ass:shape_of_field}
			The function $F: \R^2 \rightarrow \R$ has the following form
			\begin{equation}
				F(x) = F^* - \frac{1}{2}H \abs{x-x^*}^2
				\label{eq:shapeofField}
			\end{equation}
			where the maximum at position $x^*$ is also referred to as 'source' and $H > \R_{>0}$ is the unknown Hessian parameter.  
		\end{assumption}
		The objective is to design a control law $u_1(t)$, $u_2(t)$ for  \eqref{eq:unicycle_model}, so that the position of the vehicle $x$ converges to the source $x^*$ of the field $F(x)$ without using position information $x$.
		
		\section{Gradient-based Source Seeking}
		\label{sec:gradient_source_seeking}
		Zhang et al. \cite{zhang2007source} proposed the first unicycle source seeking solution, in which the forward velocity is tuned, while the angular velocity is kept constant. As a result of this, the closed-loop dynamics, in an average sense, approximates the gradient of the field $F$ and the unicycle converges to the source. A major limitation of this algorithm is that its convergence rate is dependent on the unknown Hessian $\nabla^2 F(x)$. % However, our main result addresses this limitation and eliminates the dependence of the Hessian on the convergence rate when Assumption \ref{ass:shape_of_field} is satisfied. %
		Since this is neither obvious nor elucidated by the conventional (first-order) averaging analysis in \cite{zhang2007source}, in this section we review the properties of this algorithm from the perspective of second-order Lie bracket approximations. Consider the source seeking algorithm \cite{zhang2007source},
		\begin{subequations}
			\label{eq:gradient_seeker_control_law}
			\begin{align}
				u_1 &= c (F(x) - \nu) \sin(\omega t)  + \tilde{\alpha} \cos(\omega t)\\
				u_2 &= \omega_0 \\
				\dot{\nu} &= h(F(x) - \nu)
			\end{align}
		\end{subequations}
		where $c,\tilde{\alpha}, h, \omega, \omega_0 \in \R_{>0}$. Substituting \eqref{eq:gradient_seeker_control_law} in \eqref{eq:unicycle_model}, and using the fact that $\dot{\theta} = \omega_0 \Rightarrow \theta = \omega_0 t$, results in the following closed-loop system for source seeking
		{\setlength{\mathindent}{0pt}
			\begin{subequations}
				\label{eq:gradient_seeker}
				\begin{align}
					\dot{x} &= \Big[c (F(x) - \nu) \sin(\omega t)  + \tilde{\alpha}  \cos(\omega t)\Big] \begin{bmatrix}
						\cos(\omega_0 t) \\ \sin(\omega_0 t) 
					\end{bmatrix} \\
					\dot{\nu} &= h(F(x) - \nu).
				\end{align}
			\end{subequations} 
		}
		Two different periodic signals appear in here, one of frequency $\omega$ and the other of frequency $\omega_0$, and therefore the convergence of $x$ to the source $x^*$ is not readily apparent. 
		%from the source seeking scheme \eqref{eq:gradient_seeker}. 
		%To gain insight into the convergence behaviour, we need to examine the average dynamics of the system. However, s
		Furthermore, since the vector fields in \eqref{eq:gradient_seeker} are nonautonomous, the method from Section \ref{sec:preliminaries} is not directly applicable. For this reason, we perform the following linear time-varying  transformation of the position:
		\begin{align}
			z = Y^{\top}(t) (x-x^*) \label{eq:z_transformation}
		\end{align}
		where 
		\begin{align}
			Y(t) = \begin{bmatrix}
				\sin(\omega_0 t) &\cos(\omega_0 t) \\
				-\cos(\omega_0 t) &\sin(\omega_0 t) 
			\end{bmatrix}.
		\end{align}  
		We will need a differential equation for this new position and note that
		\begin{align}
			\dot{z} = \dot{Y}^{\top}(t)(x-x^*) + Y^{\top}(t)\dot{x}.
		\end{align}
		In view of the identities
		\begin{align}
			\label{eq-inverse-of-x-to-z}
			x-x^* &= Y(t) z
			\\
			Y^{\top}(t) Y(t) &= Y(t) Y^{\top}(t) = I 
			\\
			\dot{Y}^{\top}(t) &= 
			% Y^{\top}(t) \begin{bmatrix}
				% 0 & \omega_0\\
				% -\omega_0 &0
				% \end{bmatrix} = 
			Y^{\top}(t) J_0 
			\\
			J_0 &=  \begin{bmatrix}
				0 & \omega_0\\
				-\omega_0 &0
			\end{bmatrix} 
			\label{eq:J_0_def} 
			\\
			Y^{\top}(t) J_0 Y(t) &= J_0 
			\\
			F(z) &= F^* - \frac{1}{2} H \abs{z}^2
		\end{align}
		where $I$ is the identity matrix, we obtain the following  system from \eqref{eq:gradient_seeker}: 
		{\setlength{\mathindent}{0pt}
			\begin{subequations}
				\label{eq:gradient_seeker_in_z}
				\begin{align}
					\dot{z} &= J_0 z + \Big[c (F(z) - \nu) \sin(\omega t)  + \tilde{\alpha}  \cos(\omega t)\Big] \begin{bmatrix}
						0 \\
						1
					\end{bmatrix} \\
					\dot{\nu} &= h(F(z) - \nu).
				\end{align}
			\end{subequations}
		}The averaged system of \eqref{eq:gradient_seeker_in_z} is stated in the following proposition. 
		\begin{proposition}
			\label{prop:average_gradient_seeker}
			Consider \eqref{eq:gradient_seeker_in_z} and let $c = \omega^{1-p}$ and $\tilde{\alpha} = \alpha \omega^{p}$ where $p \in (0.5,1)$. Then, under Assumption \ref{ass:shape_of_field}, its second-order Lie bracket averaged system is given as 
			\begin{subequations}
				\label{eq:LBS_gradient_seeker}
				\begin{align}
					\dot{\bar{z}} &=  (J_0  + \Lambda) \bar{z}   \\
					\dot{\bar{\nu}} &= h(F(\bar{z}) - \bar{\nu}).
				\end{align}
			\end{subequations}
			where 
			\begin{align}
				F(\bar{z}) &= F^* - \frac{1}{2} H \abs{z}^2 \label{eq:field_in_z} \\
				\Lambda &= \begin{bmatrix}
					0 &0 \\
					0 & -\frac{\alpha H}{2}
				\end{bmatrix} \label{eq:Lambda_def}
			\end{align} and $J_0$ is defined as in \eqref{eq:J_0_def} .
		\end{proposition}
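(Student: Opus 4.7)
The plan is to cast \eqref{eq:gradient_seeker_in_z} into the control-affine form \eqref{eq:general_nominal_system} with extended state $\xi=(z^{\top},\nu)^{\top}\in\R^{3}$ and then invoke Lemma 2 of \cite{labar2019newton} to read off the averaged dynamics. I identify
\[
f_0(\xi)=\begin{bmatrix}J_0 z\\ h(F(z)-\nu)\end{bmatrix},\ f_1(\xi)=\alpha\begin{bmatrix}0\\1\\0\end{bmatrix},\ f_2(\xi)=(F(z)-\nu)\begin{bmatrix}0\\1\\0\end{bmatrix},
\]
with $u_1(t)=\cos t$, $u_2(t)=\sin t$, $k_1=k_2=1$, and exponents $p_1=p$, $p_2=1-p$, so that $\tilde\alpha\cos(\omega t)=\alpha\omega^{p_1}u_1(\omega t)$ and $c(F(z)-\nu)\sin(\omega t)=\omega^{p_2}(F(z)-\nu)u_2(\omega t)$. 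Assumption \ref{ass:second_order_ass1} holds trivially because $F$ is polynomial and $u_1,u_2$ are standard sinusoids.

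Next I would compute the one surviving first-order Lie bracket. Since $f_1$ is constant in $\xi$, one has $[f_1,f_2]=(\partial f_2/\partial\xi)f_1=\alpha(0,-Hz_2,0)^{\top}$, and the associated coefficient evaluates to
\[
\gamma_{12}(\omega)=\frac{\omega^{p_1+p_2}}{T}\int_0^T\!\int_0^s\sin(\omega s)\cos(\omega p)\,dp\,ds=\tfrac12,
\]
using $p_1+p_2=1$ together with $\int_0^s\cos(\omega p)\,dp=\sin(\omega s)/\omega$ and $\int_0^T\sin^2(\omega s)\,ds=T/2$. Adding $\tfrac12[f_1,f_2]$ to $f_0$ reproduces exactly the matrix $\Lambda$ of \eqref{eq:Lambda_def} acting on $\bar z$, while leaving the $\bar\nu$-equation untouched.

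The hard part is verifying that the triple-bracket contributions drop out in the $\omega\to\infty$ limit. With $l=2$ and the sum restricted to $i<j$, only the brackets $[[f_1,f_2],f_1]$ and $[[f_1,f_2],f_2]$ arise, with exponent sums $1+p$ and $2-p$ respectively, both strictly less than $2$ for $p\in(0.5,1)$. A routine oscillatory-integral estimate gives $\gamma_{ijm}(\omega)=O(\omega^{p_i+p_j+p_m-2})\to 0$, so these terms vanish and the cubic-Jacobian clause of Assumption \ref{ass:second_order_ass2} is vacuously satisfied. Collecting $f_0$, $\tfrac12[f_1,f_2]$ and zero yields exactly \eqref{eq:LBS_gradient_seeker}; aside from the triple-bracket scaling, everything else is a one-line Lie-bracket computation and a single elementary integral.
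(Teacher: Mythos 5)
Your proposal is correct and follows essentially the same route as the paper: both cast \eqref{eq:gradient_seeker_in_z} into the control-affine form \eqref{eq:general_nominal_system} (you merely swap the labels of $f_1$ and $f_2$, which flips the signs of both $[f_1,f_2]$ and $\gamma_{12}$, so the surviving first-order contribution $\gamma_{12}[f_1,f_2]=(0,-\tfrac{\alpha H}{2}\bar z_2,0)^{\top}$ is unchanged) and then discard the triple-bracket terms in the $\omega\to\infty$ limit. The only difference is that the paper evaluates $\gamma_{121}=\tfrac{1}{2\omega^{p}}$ and $\gamma_{122}=0$ explicitly rather than through your $O(\omega^{p_i+p_j+p_m-2})$ scaling estimate, and those explicit values confirm your estimate.
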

		The proof of Propostion \ref{prop:average_gradient_seeker} is found in the Appendix. Note that, similar results to Lemma \ref{prop:average_gradient_seeker} based on different averaging techniques have been reported in \cite{zhang2007source} and \cite{LBA_first}.
		
		By linearizing \eqref{eq:LBS_gradient_seeker} around $\bar{z}^* = 0, \bar{\nu} = F^*$, we obtain 
		\begin{align}
			\begin{bmatrix}
				\dot{\bar{z}} \\
				\dot{\bar{\nu}}
			\end{bmatrix} = \begin{bmatrix}
				0 &\omega_0 &0 \\
				-\omega_0 &-\frac{\alpha H}{2} &0 \\
				0 &0 &-h
			\end{bmatrix} 	\begin{bmatrix}
				\bar{z} \\
				\bar{\nu}
			\end{bmatrix} = J_{\text{g}} \begin{bmatrix}
				\bar{z} \\
				\bar{\nu}
			\end{bmatrix} \label{eq:gradient_seeker_lin}
		\end{align} where the eigenvalues of \eqref{eq:gradient_seeker_lin} are, 
		\begin{align}
			\lambda_{1,2}(J_{\text{g}}) &= -\frac{\alpha H}{2} \pm \frac{1}{2} \sqrt{ \left(\frac{\alpha H}{2}\right)^2 - (2\omega_0)^2} \label{eq:ev_zbar_grad}\\
			\lambda_{3}(J_{\text{g}}) &= -h.
		\end{align}
		Inspecting \eqref{eq:ev_zbar_grad}, we can easily see that although $\bar{z} \rightarrow 0$, its convergence is governed by the unknown Hessian $H$. In the next section, we address this limitation and eliminate the dependence of the Hessian on the convergence rate.

		\section{Newton-based Source Seeking}
		\label{sec:newton_source_seeking}
		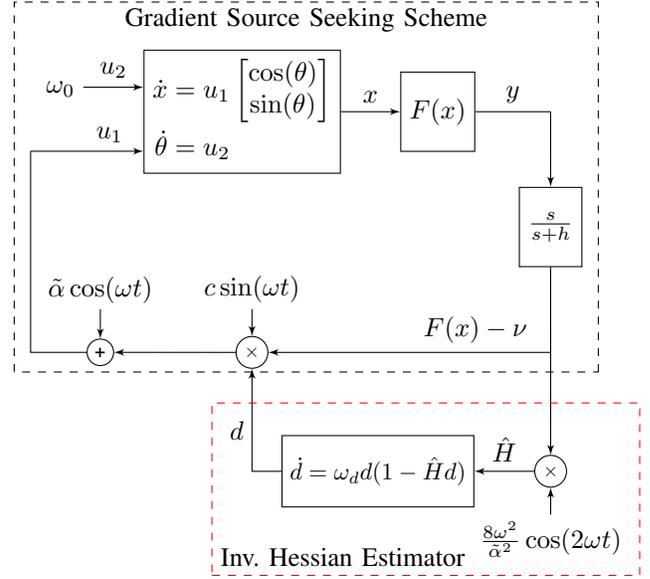
\begin{figure}[t]
			\centering
			
			\begin{tikzpicture}[auto, node distance=2cm,>=latex',scale = 0.6]
				% We start by placing the blocks
				
				\node [block ] (system) {$\begin{aligned}
						\dot{x} &=  u_1 \begin{bmatrix}
							\cos(\theta) \\
							\sin(\theta)
						\end{bmatrix} \\
						\dot{\theta} &= u_2
					\end{aligned}$};

				\node[input, above left= -0.5 cm and 0.8cm of system] (help)  {};
				
				\node[input, left of = help, node distance = 0cm] (help2)  {};
				
				\node[input, below left= -0.3cm and 1.5 cm of system] (in) {};
				\node[input, below left= -0.3cm and 0 cm of system, node distance = 0cm] (helpin) {};
				
				\node [block, right of = system, node distance = 2.6cm] (Fx) {$F(x)$};
				
				\draw [->] (system) -- node[name=u] {$x$} (Fx);
				
				% We draw an edge between the controller and system block to 
				% calculate the coordinate u. We need it to place the measurement block. 
				
				\node [output, right of = Fx, node distance = 1.5cm] (output) {};
				
				\node [sum, below left = 3.05cm  and 1.2cm of u ] (gain) {\Large $\mathbf{\times}$};
				
				\node[above of = gain, node distance = 0.7cm] (sin) {};
				
				\node [sum, below right= 3.4cm and 0.1cm of help, node distance = 2.5cm] (sum1) {\Large \textbf{+}};
				
				\node [above of = sum1, node distance = 0.7cm] (omega) {};

				\draw [->] (help) --  node[anchor = south] {$u_2$} (system.west |- help);
				\draw [] (help2) -- node[xshift = -0.3cm,yshift = -0.25cm] {$\omega_0$} (help);
				\draw [->] (in) -- node[anchor = south, pos = 0.7] {$u_1$} (helpin);
				\draw [] (Fx) -- node {$y$} (output);

				\node[block, below  = 1cm of output, node distance  = 0.5cm] (filter) {$\frac{s}{ s + h}$};

				\draw [->] (output) -- (filter);
				
				\draw [->] (filter) |- node[anchor = south, xshift = -1cm] {$F(x) - \nu$} (gain);
				\draw [->] (gain) -- (sum1);
				\draw [->] (omega) -- node[anchor = south, pos = 0.1] {$\tilde{\alpha} \cos(\omega t)$} (sum1);
				\draw [->] (sin) -- node[anchor = south, pos = 0.1] {$c \sin(\omega t)$} (gain);
				\draw[-] (sum1) -| (in);
				
				\draw[dashed] ($(current bounding box.south east)+(10pt,-2pt)$) rectangle node[anchor = north, yshift = 2.5cm] {Gradient Source Seeking Scheme} ($(current bounding box.north west)+(-10pt,30pt)$);
				
				\node [sum, below = 2.5cm of filter ] (product) {\Large $\mathbf{\times}$};
				\node[block, left = 0.8cm of product,scale = 0.9] (ricatti) {$\dot{d} = \omega_d d(1-\hat{H}d)$};
				\node [below of = product, node distance = 0.7cm] (costwo) {};

				\draw[->] (filter) -- (product);
				\draw[->] (ricatti) -| node[anchor = south,xshift = -0.2cm,yshift = 0.3cm] (d) {$d$} (gain);
				\draw[->] (product) -- node[anchor = south] {$\hat{H}$}(ricatti);
				\draw[->] (costwo) -- node[anchor = north, pos = 0.1] (cos2){$ \frac{8 \omega^2}{\tilde{\alpha}^2} \cos(2 \omega t)$} (product);

				\node [fit=(d) (cos2),draw,dashed, red, label={[anchor=south west]south west:Inv. Hessian Estimator}] {};
				
			\end{tikzpicture}
			\caption{Newton-based Source Seeking Scheme.} 	
			\label{fig:newton_seeker}
		\end{figure}
		In the previous section, we demonstrated that the $x$-trajectories of \eqref{eq:gradient_seeker} converge to a neighborhood around the source $x^*$. However, the rate of convergence is dependent on the unknown Hessian $H$. We provide next a control design with a subsequent analysis where this dependency is removed for signals that satisfy Assumption \ref{ass:shape_of_field}. For this, we employ the inverse Hessian estimation technique for Newton-based ES presented by Ghaffari et al. \cite{newton_ES} and establish the stability properties of the developed closed-loop system via the second-order Lie bracket averaging technique presented in Section \ref{sec:preliminaries}.  
		
		\subsection{Design}
		
		Our Newton modification of the gradient source seeking scheme \eqref{eq:gradient_seeker} is shown in Fig. \ref{fig:newton_seeker} and given, in equation form, by
		\begin{subequations}
			\label{eq:newton_seeker_control_law}
			\begin{align}
				u_1 &= c d (F(x) - \nu) \sin(\omega t)  + \tilde{\alpha} \cos(\omega t)\\
				u_2 &= \omega_0 \\
				\dot{d} &= \omega_d d\left(1 - d \frac{8 \omega^2}{\tilde{\alpha}^2}(F(x) - \nu) \cos(2\omega t) \right) \\
				\dot{\nu} &= h(F(x) - \nu)
			\end{align}
		\end{subequations}
		where $\omega_d \in \R_{>0}$ and $d \in \R$ with $d(0) = d_0$. The closed-loop system, obtained similarly to \eqref{eq:gradient_seeker}, reads as
		{\setlength{\mathindent}{0pt}
			\begin{subequations}
				\label{eq:newton_seeker}
				\begin{align}
					\dot{x} &= \Big[ c d(F(x) - \nu) \sin(\omega t)  + \tilde{\alpha}  \cos(\omega t)\Big]  \begin{bmatrix}
						\cos(\omega_0 t) \\ \sin(\omega_0 t) 
					\end{bmatrix} \label{eq:newton_seeker_a}\\
					\dot{d} &= \omega_d d\left(1 - d \frac{8 \omega^2}{\tilde{\alpha}^2}(F(x) - \nu) \cos(2\omega t) \right) \\
					\dot{\nu} &= h(F(x) - \nu)
				\end{align}
			\end{subequations}
		}
		The basic idea behind the Newton source seeking scheme \eqref{eq:newton_seeker} can be summarized as follows. The high-pass filtered signal $F(x) - \nu$ is multiplied with the perturbation $\cos(2\omega t)$ to obtain an average estimate of the Hessian. This signal is then passed through a Riccati filter $d$, which produces an estimate of the inverse Hessian that remains valid even when the Hessian is singular. Finally, the output of the Riccati filter, $d$, is fed back into the system via the forward velocity. The convergence of \eqref{eq:newton_seeker_a} to the source $x^*$ is independent of the Hessian which can be seen by deriving its average system, which we do next. 
		%inspecting the average of \eqref{eq:newton_seeker_in_z}. 
		Just as before, in order to use the technique from Section \ref{sec:preliminaries}, we transform  \eqref{eq:newton_seeker_a} with \eqref{eq:z_transformation} and obtain 
		{\setlength{\mathindent}{0pt}
			\begin{subequations}
				\label{eq:newton_seeker_in_z}
				\begin{align}
					\dot{z} &= J_0 z + \Big[ c d (F(z) - \nu) \sin(\omega t)  + \tilde{\alpha}  \cos(\omega t)\Big]  \begin{bmatrix}
						0 \\
						1
					\end{bmatrix} \\
					\dot{d} &= \omega_d d\left(1 - d \frac{8 \omega^2}{\tilde{\alpha}^2}(F(z) - \nu) \cos(2\omega t) \right) \label{eq:ricatti_filter_z}\\
					\dot{\nu} &= h(F(z) - \nu).
				\end{align}
			\end{subequations}
		}
		The averaged system is given by the next proposition.
		\begin{proposition}
			\label{prop:average_newton_seeker}
			Consider \eqref{eq:newton_seeker_in_z} and let $c = \omega^{1-p}$ and $\tilde{\alpha} = \alpha \omega^{p}$ where $p \in (0.5,1)$. Then, under Assumption \ref{ass:shape_of_field}, its second-order Lie bracket averaged system is given as 
			\begin{subequations}
				\label{eq:LBS_newton_seeker}
				\begin{align}
					\dot{\bar{z}} &=  (J_0 + \Lambda \bar{d} ) \bar{z}   \\
					\dot{\bar{d}} &= \omega_d \bar{d} (1-H\bar{d}) \\
					\dot{\bar{\nu}} &= h(F(\bar{z}) - \bar{\nu}).
				\end{align}
			\end{subequations}
			where $J_0$,  $F(\bar{z})$ and  $\Lambda$ are defined as in \eqref{eq:J_0_def}, \eqref{eq:field_in_z} and \eqref{eq:Lambda_def}, respectively.
		\end{proposition}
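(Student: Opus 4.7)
The plan is to cast the closed-loop system \eqref{eq:newton_seeker_in_z} in the canonical form \eqref{eq:general_nominal_system} and then apply \cite[Lemma 2]{labar2019newton} to read off the second-order Lie-bracket averaged system. Writing the augmented state as $\xi=(z_1,z_2,d,\nu)^{\top}$, I identify the drift
\[
f_0(\xi) = \bigl[\omega_0 z_2,\; -\omega_0 z_1,\; \omega_d d,\; h(F(z)-\nu)\bigr]^{\top},
\]
together with three oscillatory vector fields: $f_1$ carrying $d(F(z)-\nu)$ in the $z_2$-slot, $f_2$ carrying the constant $\alpha$ in the $z_2$-slot, and $f_3$ carrying $-\tfrac{8\omega_d}{\alpha^{2}}d^{2}(F(z)-\nu)$ in the $d$-slot. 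Pairing these with $u_1(t)=\sin t$, $u_2(t)=\cos t$, $u_3(t)=\cos t$, frequencies $k_1=k_2=1$, $k_3=2$, and scaling exponents $p_1=1-p$, $p_2=p$, $p_3=2-2p$ (all in $(0,1)$ for $p\in(0.5,1)$), and substituting $c=\omega^{1-p}$, $\tilde\alpha=\alpha\omega^{p}$, reproduces \eqref{eq:newton_seeker_in_z} verbatim.

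Next I would verify the hypotheses. Assumption \ref{ass:second_order_ass1} is immediate: the $f_i$ are smooth because $F$ is quadratic by Assumption \ref{ass:shape_of_field}, and $u_1,u_2,u_3$ are bounded, $2\pi$-periodic and zero-mean. For Assumption \ref{ass:second_order_ass2}, on the common period $T=2\pi/\omega$ the iterated integrals associated with pairings of mismatched harmonics all vanish by trigonometric orthogonality (for instance $\int_0^{2\pi}\cos 2s\,\sin s\,ds = \int_0^{2\pi}\cos 2s\,\cos s\,ds = 0$), which disposes of every case in which $p_i+p_j>1$; the analogous triple-integral cancellations handle $p_i+p_j+p_m>2$. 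The fourth-order derivative condition is vacuous because $F$ is quadratic and $f_2$ is constant, so any iterated Jacobian composition of depth four among the $f_i$ vanishes identically.

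I would then compute the surviving bracket contributions in \eqref{eq:second_order_LBS_general}. The drift $f_0$ directly supplies the $J_0\bar z$, $\omega_d\bar d$, and $h(F(\bar z)-\bar\nu)$ terms of \eqref{eq:LBS_newton_seeker}. Using $\partial_{z_k}F=-Hz_k$, a direct computation gives $[f_1,f_2]=(0,\alpha H d\,z_2,0,0)^{\top}$; combined with the standard first-order coefficient $\gamma_{12}(\omega)\to-\tfrac12$ this produces the term $-\tfrac{\alpha H}{2}\bar d\,\bar z_2$ in the second slot, which is precisely $\Lambda\bar d\,\bar z$ with $\Lambda$ as in \eqref{eq:Lambda_def}. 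The $-\omega_d H\bar d^{\,2}$ term in the $d$-equation arises from the second-order bracket $[[f_2,f_3],f_2]=(0,0,-8\omega_d H d^{\,2},0)^{\top}$, with the coefficient $8\omega^{2}/\tilde\alpha^{2}$ in \eqref{eq:newton_seeker_control_law} chosen precisely so that, after multiplication by the corresponding $\gamma_{232}(\omega)$, the net contribution to the $d$-equation is exactly $-\omega_d H\bar d^{\,2}$.

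The main obstacle is the bookkeeping in the last step: enumerating the nine triples $(i,j,m)$ with $i<j$ and $m\in\{1,2,3\}$ and checking, for each, whether $[[f_i,f_j],f_m]$ or its integral coefficient $\gamma_{ijm}$ vanishes. Most drop out because $f_2$ is constant (collapsing many inner brackets), because $F$ is quadratic (killing higher Jacobian compositions), or because the triple trigonometric integrals cancel by orthogonality on the common period. Once only the two surviving contributions remain, they combine with the drift $f_0$ to yield \eqref{eq:LBS_newton_seeker} exactly.
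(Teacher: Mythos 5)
Your proposal follows essentially the same route as the paper's proof: the identical decomposition into $f_0,f_1,f_2,f_3$ with $u_1=\sin$, $u_2=\cos$, $u_3=\cos(2\cdot)$ and exponents $p_1=1-p$, $p_2=p$, $p_3=2-2p$, the same invocation of \cite[Lemma 2]{labar2019newton}, and the same two surviving contributions $[f_1,f_2]\gamma_{12}$ and $[[f_2,f_3],f_2]\gamma_{232}$, so it is correct. One small caution for the bookkeeping step: the terms $[[f_1,f_2],f_1]$ and $[[f_1,f_3],f_1]$ have \emph{nonzero} brackets and nonzero integrals, and they disappear not by trigonometric orthogonality or constancy of $f_2$ but because $\gamma_{121}=\tfrac{1}{2\omega^{p}}$ and $\gamma_{131}=-\tfrac{1}{8\omega^{4p-2}}$ vanish only in the limit $\omega\to\infty$ --- this is precisely where the hypothesis $p>0.5$ is used, which your list of vanishing mechanisms does not capture.
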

		The proof of Proposition \ref{prop:average_newton_seeker} is provided in the Appendix. By linearizing \eqref{eq:LBS_gradient_seeker} around $\bar{z}^* = 0,\bar{d}^* = H^{-1}, \bar{\nu} = F^*$, we obtain 
		\begin{align}
			\begin{bmatrix}
				\dot{\bar{z}} \\
				\dot{\bar{d}} \\
				\dot{\bar{\nu}}
			\end{bmatrix} = \begin{bmatrix}
				0 &\omega_0 &0 &0 \\
				-\omega_0 &-\frac{\alpha}{2} &0 &0 \\
				0 &0 &-\omega_d &0 \\
				0 &0 &0 &-h
			\end{bmatrix} 	\begin{bmatrix}
				\bar{z} \\
				\bar{d} \\
				\bar{\nu}
			\end{bmatrix} = J_{\text{n}} \begin{bmatrix}
				\bar{z} \\
				\bar{d} \\
				\bar{\nu}
			\end{bmatrix} \label{eq:newton_seeker_lin}
		\end{align} where the eigenvalues of \eqref{eq:newton_seeker_lin} are, 
		\begin{align}
			\lambda_{1,2}(J_{\text{n}}) &= -\frac{\alpha }{2} \pm \frac{1}{2} \sqrt{ \left(\frac{\alpha }{2}\right)^2 - (2\omega_0)^2} \label{eq:ev_zbar_newton} \\
			\lambda_{3}(J_{\text{n}}) &= -\omega_d\\
			\lambda_{4}(J_{\text{n}}) &= -h.
		\end{align}
		From equation \eqref{eq:ev_zbar_newton}, we observe that $\bar{z}$ converges to zero, but in contrast to \eqref{eq:ev_zbar_grad}, the convergence is independent of the unknown Hessian matrix $H$.  The stability properties of the newly proposed Newton source seeking scheme \eqref{eq:newton_seeker} are analyzed in the following section.
		\subsection{Analysis}
		
		%{\color{red}
			\begin{theorem}[Newton source seeking]
				\label{thm:newton_ss_theorem}
				Consider \eqref{eq:newton_seeker} and let $c = 
				\omega^{1-p}$ and $\tilde{\alpha} = \alpha \omega^p$ where $p \in (0.5,1)$. Then, under  Assumption \ref{ass:shape_of_field}, the  point $x = x^*, d = H^{-1}, 
				\nu = F^*$ of \eqref{eq:newton_seeker} is sGPUAS as in Def. 
				\ref{def:sGPUAS} on the set $\{ d>0\}$, namely, for initial conditions $(x_0,d_0,\nu_0)$ where $|x_0|, |\nu_0|, d_0$ are arbitrarily large and $d_0>0$ is arbitrarily small.
			\end{theorem}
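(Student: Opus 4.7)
The plan is to reduce the theorem, via Lemma \ref{lemma:sGPUAS}, to a uniform asymptotic stability statement for the averaged system (\ref{eq:LBS_newton_seeker}) at the equilibrium $(\bar{z},\bar{d},\bar{\nu})=(0,H^{-1},F^*)$. First I would apply the rotation $z=Y^\top(t)(x-x^*)$ from Section \ref{sec:gradient_source_seeking} to rewrite the closed-loop system (\ref{eq:newton_seeker}) as the autonomous-form template (\ref{eq:newton_seeker_in_z}), which matches (\ref{eq:general_nominal_system}). The checks of Assumptions \ref{ass:second_order_ass1}--\ref{ass:second_order_ass2} are the same as those needed to establish Proposition \ref{prop:average_newton_seeker} in the appendix, after which Proposition \ref{prop:average_newton_seeker} identifies (\ref{eq:LBS_newton_seeker}) as the second-order Lie bracket averaged system. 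Since the orthogonal rotation $Y(t)$ preserves norms and is bounded, convergence of $\bar{z}$ transfers back to convergence of $x-x^*$.

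Next I would exploit the triangular cascade structure of (\ref{eq:LBS_newton_seeker}). The scalar Riccati equation $\dot{\bar{d}}=\omega_d \bar{d}(1-H\bar{d})$ is decoupled and admits the closed-form solution $\bar{d}(t)=H^{-1}/\bigl(1+(H^{-1}/\bar{d}_0-1)e^{-\omega_d t}\bigr)$, so $\tilde{d}(t):=\bar{d}(t)-H^{-1}\to 0$ exponentially for every $\bar{d}_0>0$, uniformly on compact subsets of $\R_{>0}$. Substituting into the $\bar{z}$-equation yields $\dot{\bar{z}}=(J_0+\Lambda H^{-1})\bar{z}+\Lambda\tilde{d}(t)\bar{z}$, whose nominal matrix $J_0+\Lambda H^{-1}$ is Hurwitz with eigenvalues \emph{independent} of $H$, as exhibited in (\ref{eq:ev_zbar_newton}). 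Solving the Lyapunov equation for $J_0+\Lambda H^{-1}$ and using the quadratic $V_z=\bar{z}^\top P\bar{z}$, one obtains an ISS-type estimate from $\tilde{d}$ to $\bar{z}$; since $\tilde{d}$ decays exponentially, $\bar{z}(t)\to 0$. The shift $\tilde{\nu}:=\bar{\nu}-F^*$ then satisfies $\dot{\tilde{\nu}}=-h\tilde{\nu}-\tfrac{1}{2}hH\abs{\bar{z}}^2$, an exponentially stable scalar filter driven by a vanishing input, so $\tilde{\nu}\to 0$. Chaining the three subsystems yields uniform asymptotic stability of $(0,H^{-1},F^*)$ on compact subsets of $\{d>0\}$.

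The principal obstacle is that this stability is only \emph{semi-global in $d$}: the Riccati equation has a second equilibrium at $\bar{d}=0$ which is unstable, and trajectories starting from $\bar{d}_0<0$ blow up to $-\infty$ in finite time, so true GUAS of (\ref{eq:LBS_newton_seeker}) fails and Lemma \ref{lemma:sGPUAS} cannot be invoked verbatim. To circumvent this I would observe that $\{d>0\}$ is forward invariant under (\ref{eq:LBS_newton_seeker}), restrict the neighborhood $\V$ in Definition \ref{def:sGPUAS} to lie inside $\{d\ge d_{\min}\}$ for an arbitrary $d_{\min}>0$, and treat $d_{\min}$ as one of the semi-global parameters in the $\epsilon$-vector of Definition \ref{def:sGPUAS}; this matches the theorem statement, which explicitly allows $d_0>0$ to be arbitrarily small but positive. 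Modulo this restriction, the cascade argument supplies the GUAS hypothesis required by Lemma \ref{lemma:sGPUAS}, which then delivers the claimed sGPUAS conclusion for the original system (\ref{eq:newton_seeker}) in the coordinates $(x,d,\nu)$.
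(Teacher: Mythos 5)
Your overall architecture (average, exploit the cascade, transfer back via Lemma \ref{lemma:sGPUAS} and the rotation $Y(t)$) matches the paper's, and your sequential cascade analysis of the averaged system --- closed-form Riccati solution, then a Lyapunov/comparison estimate for $\bar z$ driven by the exponentially decaying $\bar d - H^{-1}$, then the scalar filter for $\bar\nu$ --- is a workable alternative to the paper's joint Lyapunov function $V=\ln(1+\bar z^\top P\bar z)+\tfrac{b}{\omega_d}(e^{\hat d}-\hat d-1)$ combined with an ISS argument for $r=\bar\nu-F(\bar z)$. The genuine gap is in how you handle the failure of GUAS. You correctly diagnose that \eqref{eq:LBS_newton_seeker} is not GUAS because of the unstable equilibrium $\bar d=0$ and finite-time escape for $\bar d_0<0$, but your remedy --- restrict $\V$ to $\{d\ge d_{\min}\}$ and ``treat $d_{\min}$ as one of the semi-global parameters in the $\epsilon$-vector'' --- is not licensed by anything in the paper. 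The parameters $\epsilon_i$ in Definition \ref{def:sGPUAS} are gain/frequency parameters taken small under nested quantifiers, not descriptors of the initial-condition set, and Lemma \ref{lemma:sGPUAS} has GUAS of the Lie bracket system on all of $\R^n$ as an explicit hypothesis; no regional version is stated in the paper, and you cannot simply assert one, since the approximation and stability-transfer estimates of \cite{labar2019newton} are derived globally.

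The paper's resolution, which your proof is missing, is the change of variable $\tilde d=\ln(d)$ applied \emph{before} averaging (Step 1 of the paper's proof, equation \eqref{eq-d-to-dtilde}). This maps the invariant half-line $\{d>0\}$ onto all of $\R$, turns the Riccati filter into $\dot{\tilde d}=\omega_d\bigl(1-e^{\tilde d}\tfrac{8\omega^2}{\tilde\alpha^2}(F(z)-\nu)\cos(2\omega t)\bigr)$ whose averaged version $\dot{\bar{\tilde d}}=\omega_d(1-He^{\bar{\tilde d}})$ has $\ln(H^{-1})$ as its unique, globally attractive equilibrium, and thereby produces an averaged system \eqref{eq:LBS_newton_seeker_exp_origin} that is genuinely GUAS on $\R^4$, so that Lemma \ref{lemma:sGPUAS} applies verbatim; the paper explicitly remarks that the GUAS step ``would not be possible without transformation \eqref{eq-d-to-dtilde}.'' The claim that $d_0>0$ may be arbitrarily small is then recovered in the final step because bounded neighborhoods in the $\tilde d$-coordinate correspond to $d$-intervals of the form $[e^{-M},e^{M}]$, which exhaust $\{d>0\}$ as $M\to\infty$. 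Without this transformation, or a proof of a regional averaging lemma that you do not supply, your argument does not close.
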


			\begin{proof}[Proof of Theorem \ref{thm:newton_ss_theorem}]
				The proof is divided into 5 steps to increase readability.
				
				{\noindent \textbf{Step 1.} \underline{\textit{Variable transformations}:}}
				
				In order to use the results from Section \ref{sec:preliminaries} for the convergence analysis of \eqref{eq:newton_seeker}, we need to bring the system in a suitable form. To this end, we first apply \eqref{eq:z_transformation} to \eqref{eq:newton_seeker} which was already done above and obtain the Newton source seeking scheme in $z$ as \eqref{eq:newton_seeker_in_z}.
				Next, we eliminate the unstable point $d = 0$ in the Riccati filter \eqref{eq:ricatti_filter_z} by applying the transformation 
				\begin{equation}
					\label{eq-d-to-dtilde}
					\tilde{d} = \ln(d), \qquad d>0 
				\end{equation}
				to \eqref{eq:ricatti_filter_z}. Thus, we obtain 
				{\setlength{\mathindent}{0pt}
					\begin{subequations}
						\label{eq:newton_seeker_in_z_and_d}
						\begin{align}
							\dot{z} &= J_0 z + \Big[ c e^{\tilde{d}} (F(z) - \nu) \sin(\omega t)  + \tilde{\alpha}  \cos(\omega t)\Big]  \begin{bmatrix}
								0 \\
								1
							\end{bmatrix} \\
							\dot{\tilde{d}} &= \omega_d \left(1 - e^{\tilde{d}} \frac{8 \omega^2}{\tilde{\alpha}^2}(F(z) - \nu) \cos(2\omega t) \right) \\
							\dot{\nu} &= h(F(z) - \nu).
						\end{align}
					\end{subequations}
				}

				{\noindent \textbf{Step 2.} \underline{\textit{Second-order Lie bracket averaging}:}}
				
				In this step, we calculate the second-order Lie bracket averaged system for \eqref{eq:newton_seeker_in_z_and_d}. Since we follow a similar procedure as in the derivation of \eqref{eq:LBS_newton_seeker} (see Appendix), we do not repeat the calculations here. By choosing $c = \omega^{1-p}$ and $\tilde{\alpha} = \alpha \omega^{p}$, where $p \in (0.5,1)$, the second-order Lie bracket averaged system is obtained as
				\begin{subequations}
					\label{eq:LBS_newton_seeker_exp}
					\begin{align}
						\dot{\bar{z}} &= (J_0 + \Lambda e^{\bar{\tilde{d}}}) \bar{z} \\
						\dot{\bar{\tilde{d}}} &= \omega_d (1 - H e^{\bar{\tilde{d}}}) \\
						\dot{\bar{\nu}} &= h(F(\bar{z}) - \bar{\nu}).
					\end{align}
				\end{subequations}
				where the above result holds under Assumption \ref{ass:shape_of_field}. 
				To analyze \eqref{eq:LBS_newton_seeker_exp} at the origin, we shift the coordinates as 
				\begin{align}
					\hat{d} &= \bar{\tilde{d}} - \ln(H^{-1})\\
					r &= \bar{\nu} - F(z)
				\end{align}
				and obtain 
				\begin{subequations}
					\label{eq:LBS_newton_seeker_exp_origin}
					\begin{align}
						\dot{r} &= -h r + H \bar{z}^\top(J_0 + \tilde{\Lambda} e^{\hat{d}}) \bar{z} \label{eq:sub_r}\\
						\dot{\bar{z}} &= (J_0 + \tilde{\Lambda} ) \bar{z} + (e^{\hat{d}} - 1) \tilde{\Lambda}  \bar{z} \label{eq:sub_z} \\
						\dot{\hat{d}} &= -\omega_d (e^{\hat{d}} - 1) \label{eq:sub_d} 
					\end{align}
				\end{subequations}
				where $\tilde{\Lambda} = H^{-1} \Lambda$. Observe that, in order to utilize Lemma \ref{lemma:sGPUAS} for the convergence analysis of \eqref{eq:newton_seeker_in_z_and_d}, we need to prove that \eqref{eq:LBS_newton_seeker_exp_origin} is GUAS. This would not be possible without transformation \eqref{eq-d-to-dtilde}.

				{\noindent \textbf{Step 3.} \underline{\textit{Stability analysis of the Lie bracket system}:}}
				
				The intentional choice of writing system \eqref{eq:LBS_newton_seeker_exp_origin} in its current form is to facilitate the recognition of its cascade system structure. This, in turn, allows us to utilize \cite[Lemma 4.7]{khalil2002nonlinear} for the analysis of its asymptotic stability.
				For this, we prove that the origin of subsystem \eqref{eq:sub_z} and \eqref{eq:sub_d} is GUAS. Then, we show that \eqref{eq:sub_r} is input-to-state stable (ISS)\footnote{ISS definition can be found in \cite[Def. 4.7]{khalil2002nonlinear}.} with respect to $\bar{z}$ and $\hat{d}$. Thus, we can conclude that the origin of \eqref{eq:LBS_newton_seeker_exp_origin} is GUAS.
				Inspired by \cite{ali_Lyapunov} and \cite{krstic1998useful}, consider a Lyapunov function for subsystem \eqref{eq:sub_z} and \eqref{eq:sub_d} as 
				\begin{align}
					V = \ln(1 + \bar{z}^\top P \bar{z})  + \frac{b}{\omega_d} (e^{\hat{d}}-\hat{d} - 1) \label{eq:Lyapunov_func}
				\end{align}
				where $P = P^T > 0$ is chosen as 
				\begin{equation}
					P = \begin{bmatrix}
						\frac{\alpha}{4 \omega_0^2} + \frac{2}{\alpha} &\frac{1}{2\omega_0} \\
						\frac{1}{2\omega_0} &\frac{2}{\alpha}
					\end{bmatrix},
				\end{equation}such that $P(J_0 + \tilde{\Lambda} ) + (J_0+ \tilde{\Lambda})^\top P = -I$ holds and the parameter $b$ is 
				\begin{equation}
					b = \frac{\norm{G}^2}{2 \lambda_{\text{min}} (P)}  > 0 \label{eq:b_const}
				\end{equation}
				with 
				\begin{equation}
					G = P \tilde{\Lambda} + \tilde{\Lambda}^\top P = \begin{bmatrix}
						0 &\frac{-\alpha}{4 \omega_0} \\
						\frac{-\alpha}{4 \omega_0} &-2
					\end{bmatrix} \label{eq:matrix_G}
				\end{equation}
				and 
				\begin{align}
					\norm{G}^2 &= 2 + \frac{\alpha^2}{16 \omega_0^2}  + \frac{1}{2 \omega_0} \sqrt{\alpha^2 + 16 \omega_0^2} \\
					\lambda_{\text{min}} (P) &= \frac{2}{\alpha} + \frac{\alpha}{8 \omega_0^2}  - \frac{1}{2 \omega_0^2} \sqrt{ \frac{\alpha^2}{16} +  \omega_0^2} >0 .
				\end{align}
				Then, the derivative of \eqref{eq:Lyapunov_func} along the trajectories of the subsystem \eqref{eq:sub_z} and \eqref{eq:sub_d} is 
				{\setlength{\mathindent}{0pt}
					\begin{equation}
						\begin{aligned}[b]
							\dot{V} =& \frac{\bar{z}^\top (P(J_0 + \tilde{\Lambda} ) + (J_0+ \tilde{\Lambda})^\top P) \bar{z}  }{1 + \bar{z}^\top P \bar{z}} \\  &+  \frac{(e^{\hat{d}} - 1)\bar{z}^\top (P\tilde{\Lambda}  +  \tilde{\Lambda}^\top P) \bar{z} }{1 + \bar{z}^\top P \bar{z} }- b(e^{\hat{d}} - 1)^2
						\end{aligned}
					\end{equation}
				}
				
				With the choice of $P$ and $G$ as above, $\dot{V}$ is written as 
				{\setlength{\mathindent}{0pt}
					\begin{equation}
						\label{eq:Vdot_temp}
						\begin{aligned}[b]
							&\dot{V} = \frac{-\abs{\bar{z}}^2 + (e^{\hat{d}} - 1)\bar{z}^\top G \bar{z} }{1 + \bar{z}^\top P \bar{z}}  - b(e^{\hat{d}} - 1)^2 \\
							&= \frac{-\abs{\bar{z}}^2 + (e^{\hat{d}} - 1)\bar{z}^\top G \bar{z} - b(e^{\hat{d}} - 1)^2 (1 + \bar{z}^\top P \bar{z})}{1 + \bar{z}^\top P \bar{z}}  
						\end{aligned}
					\end{equation}
				}Using the fact that $\bar{z}^\top P \bar{z} \le \lambda_{\text{min}}(P) \abs{\bar{z}}^2 $ and similarly $\bar{z}^\top G \bar{z} \le \norm{G} \abs{\bar{z}}^2 $, where $\lambda_{\text{min}}(P)$ is the smallest eigenvalue of the positive definite matrix $P$ and $\norm{G}$ is the matrix norm of \eqref{eq:matrix_G}, we can derive an upper bound for \eqref{eq:Vdot_temp} as 
				\begin{equation}
					\begin{aligned}[b]
						\dot{V} \le &\frac{-\abs{\bar{z}}^2 - b(e^{\hat{d}} - 1)^2    }{1 + \bar{z}^\top P \bar{z}} \\ &+ \frac{(e^{\hat{d}} - 1)\norm{G} \abs{\bar{z}}^2  - b(e^{\hat{d}} - 1)^2 \lambda_{\text{min}}(P) \abs{\bar{z}}^2}{1 + \bar{z}^\top P \bar{z}}
					\end{aligned}
				\end{equation}
				Taking the inequality
				\begin{equation}
					(e^{\hat{d}} - 1)\norm{G} \abs{\bar{z}}^2 \leq \frac{1}{2} \abs{\bar z}^2 + \frac{1}{2}(e^{\hat{d}} - 1)^2 \norm{G}^2 \abs{\bar{z}}^2
				\end{equation}
				into account and substituting it above, we obtain 
				\begin{equation}
					\begin{aligned}[b]
						\dot{V} \le &\frac{-\frac{1}{2}\abs{\bar{z}}^2 - b(e^{\hat{d}} - 1)^2  }{1 + \bar{z}^\top P \bar{z}} \\ &- \frac{(e^{\hat{d}} - 1)^2 \abs{\bar{z}}^2\left(b \lambda_{\text{min}}(P) - \frac{\norm{G}^2}{2} \right)  }{1 + \bar{z}^\top P \bar{z}}.
					\end{aligned}
				\end{equation}
				With the choice of $b$ as in \eqref{eq:b_const}, the upper bound for $\dot{V}$ reads as 
				\begin{equation}
					\dot{V} \le \frac{-\frac{1}{2}\abs{\bar{z}}^2 - b(e^{\hat{d}} - 1)^2  }{1 + \bar{z}^\top P \bar{z}}. \label{eq:Lyapunov_inequality}
				\end{equation}
				Since $\dot{V}$ is bounded from above by a negative definite function as in \eqref{eq:Lyapunov_inequality}, then according to \cite[Thm. 4.9]{khalil2002nonlinear}, we can conclude that the origin of the subsystem \eqref{eq:sub_z} and \eqref{eq:sub_d} is GUAS. 
				
				To complete the analysis, we need to demonstrate that \eqref{eq:sub_r} is ISS with respect to $\bar{z}$ and $\hat{d}$. To simplify the notation, we define the vector $g = [\bar{z}^\top, \hat{d}]^\top$ and our goal is to prove that \eqref{eq:sub_r} is ISS with respect to $g$. Hence, we can bound $\dot{r}$ as 
				\begin{equation}
					\begin{aligned}[b]
						\dot{r} &\le -h \abs{r} + H \abs{z}^2 \left(\norm{J_0} + e^{\hat{d}} \|\tilde{\Lambda}\| \right) \\
						&\le -h \abs{r} + H (\abs{z} + |\hat{d}|)^2 \left(\norm{J_0} + e^{(\abs{z} + |\hat{d}|)} \|\tilde{\Lambda}\| \right) \\
						&\le -h \abs{r} + H (2 \abs{g})^2 \left(\norm{J_0} + e^{2\abs{g}} \|\tilde{\Lambda}\| \right) \\
						&\le - \rho_1(\abs{r}) + \rho_2(\abs{g})	
					\end{aligned}
				\end{equation}
				Given that both $\rho_1$ and $\rho_2$ are $\mathcal{K}_{\infty}$ functions, we can refer to \cite[Thm. C.3]{krstic1995nonlinear} to conclude that the system \eqref{eq:sub_r} is ISS with respect to $g$. Considering this and the fact that subsystem \eqref{eq:sub_z} and \eqref{eq:sub_d} is GUAS, we can invoke \cite[Lemma 4.7]{khalil2002nonlinear} and conclude that the origin of \eqref{eq:LBS_newton_seeker_exp_origin} is GUAS.

				{\noindent \textbf{Step 4.} \underline{\textit{Second-order Lie bracket stability theorem}:}}
				
				In Step 3, we showed the asymptotic stability properties of the
				Lie bracket system \eqref{eq:LBS_newton_seeker_exp_origin}.  Hence, we can apply Lemma \ref{lemma:sGPUAS} from Section \ref{sec:preliminaries} and conclude
				that the equilibrium point $z^* = 0, \tilde{d} = \ln(H^{-1}), \nu = F^*$ of \eqref{eq:newton_seeker_in_z_and_d} is sGPUAS.

				%{\color{red} 
					\noindent \textbf{Step 5.} \underline{\textit{Back to the original coordinates}:}
					
					Invoking \eqref{eq-inverse-of-x-to-z} and \eqref{eq-d-to-dtilde}, we recall that $x=x^* +Y(t) z$ and $e^{\tilde{d}}$, respectively. From the fact that $z^* = 0, \tilde{d} = \ln(H^{-1}), \nu = F^*$ is sGPUAS on $\R^4$ for the $(z,\tilde d, \nu)$-system, we obtain that $x = x^*, d = H^{-1}, 
					\nu = F^*$ is sGPUAS on the set $\{ d>0\}$ for the $(x,d,\nu)$-system.
				\end{proof}
				
				\begin{remark}\em
					Theorem \ref{thm:newton_ss_theorem} solely provides the practical stability properties of \eqref{eq:newton_seeker}. However, the Hessian-invariant convergence rate is observed through the eigenvalues  \eqref{eq:ev_zbar_newton}.%{eq:LBS_newton_seeker}.
				\end{remark}

				%%%%%%%%%%%%%%%%%%%%%%%%%%%%%%%%%%%%%%%%%%%%%%%%%%%%%%%%%
				%%%%%%%%%%%%%%% New Section %%%%%%%%%%%%%%%%%%%%%%%%%%%%%
				%%%%%%%%%%%%%%%%%%%%%%%%%%%%%%%%%%%%%%%%%%%%%%%%%%%%%%%%%

				\section{Simulation Results}
				\label{sec:simulation}
				In this section, we evaluate the Newton source seeking scheme \eqref{eq:newton_seeker} on practical examples. We also refer to \eqref{eq:newton_seeker} as Newton seeker. To highlight the advantages of our proposed design, we compare it with the gradient source seeking scheme \eqref{eq:gradient_seeker}, analogously referred to as Gradient seeker, for the same values of the coinciding parameters. For the simulations, these are chosen as $\omega = 15$, $\omega_0 = 1$, $\alpha = 2$, $\omega_d = 0.3$ and $p = 0.61$.
				The field $F(x)$ is of the form \eqref{eq:shapeofField} with a peak of $F^* = 5$ and the source is located at $x^* = [1,-1]^\top$. Furthermore, the scalar Hessian parameter is $H = \frac{1}{100}$. The initial conditions are chosen as $x_0 = [4,-4]^\top$, $\nu_0 = 0$ and $d_0 = 1$.
				
				The  results are presented in Figures \ref{fig:hessian} and  \ref{fig:states_and_F}. Figure \ref{fig:hessian} clearly demonstrates that the Riccati filter successfully estimates the inverse of the Hessian, $H^{-1} = 100$. As a result, the Newton source seeking scheme \eqref{eq:newton_seeker} converges much faster towards the source $x^* = [1,-1]$, as observed in Figure \ref{fig:states_and_F}. In contrast, the gradient source seeking scheme \eqref{eq:gradient_seeker} exhibits a considerably slower convergence rate towards the source.
				
				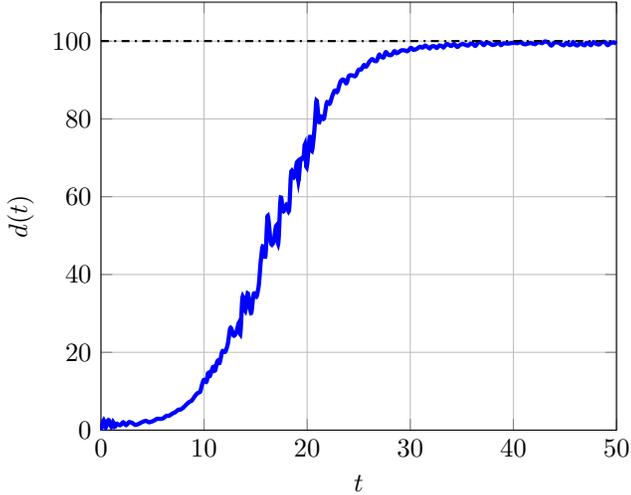
\begin{figure}
					\def\file{plots/newton_ss_data.txt}
\hspace*{-0.5cm}
\vspace*{-0.5cm}
\begin{tikzpicture}
	\begin{axis}[
		grid=both,
	xmin=0, xmax=50,
	ymin=0, ymax=110,
	ylabel=$d(t)$, xlabel=$t$]
	
	\addplot[ultra thick,blue,smooth] table[x=t_grad,y=d]{\file};

	\addplot[thick,black,dashdotted, domain = 0:50,samples = 2] {100};
\end{axis};
\end{tikzpicture}
					\caption{The Riccati state $d(t)$ converging to the inverse of the Hessian $H^{-1} = 100$.}
					\label{fig:hessian}
				\end{figure}

				\begin{figure}
					%	\centering
					
					\begin{subfigure}[b]{0.55\textwidth}
						\includegraphics[scale = 0.3]{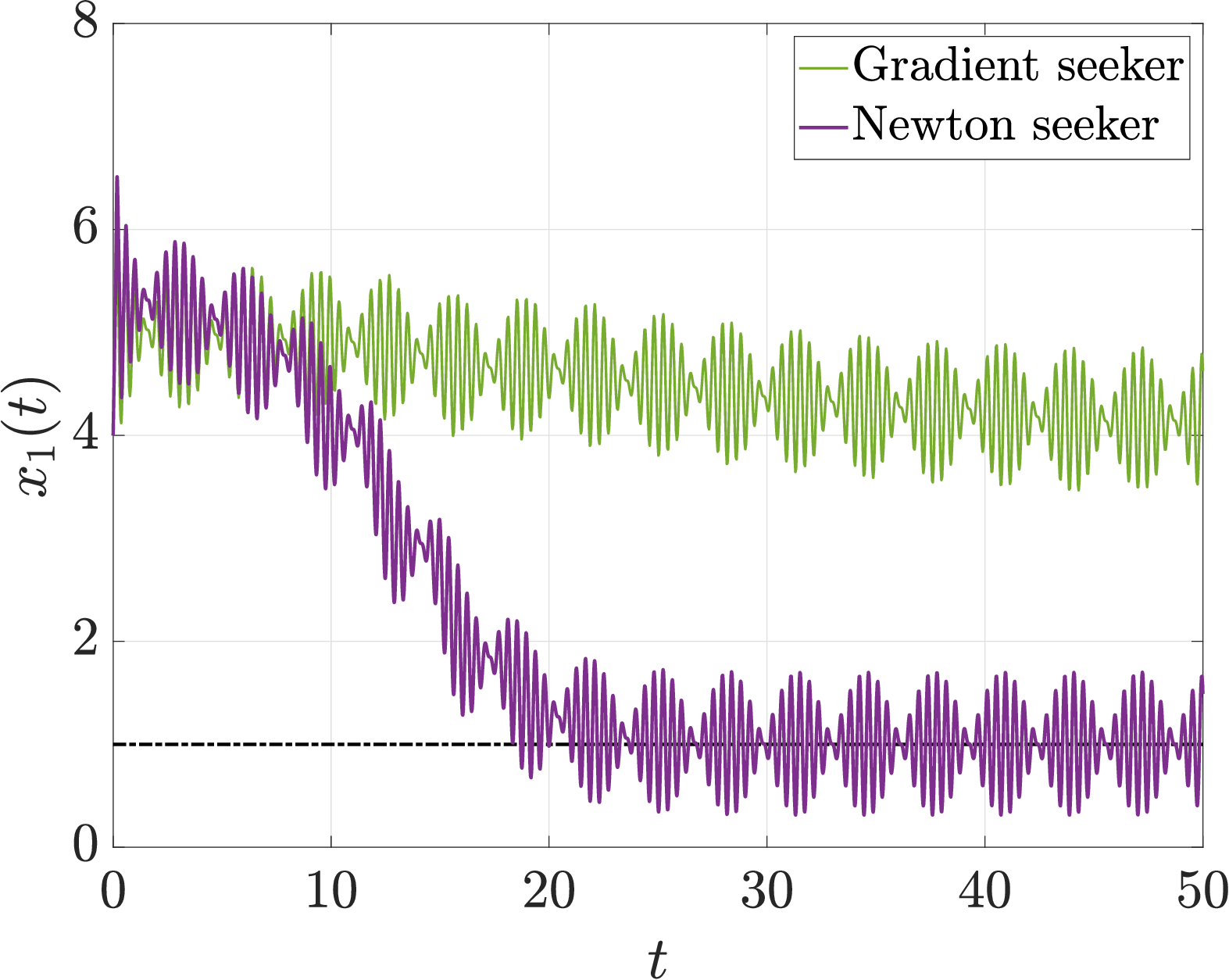} 
						\caption{}
					\end{subfigure}
					
					\begin{subfigure}[b]{0.55\textwidth}
						\includegraphics[scale = 0.3]{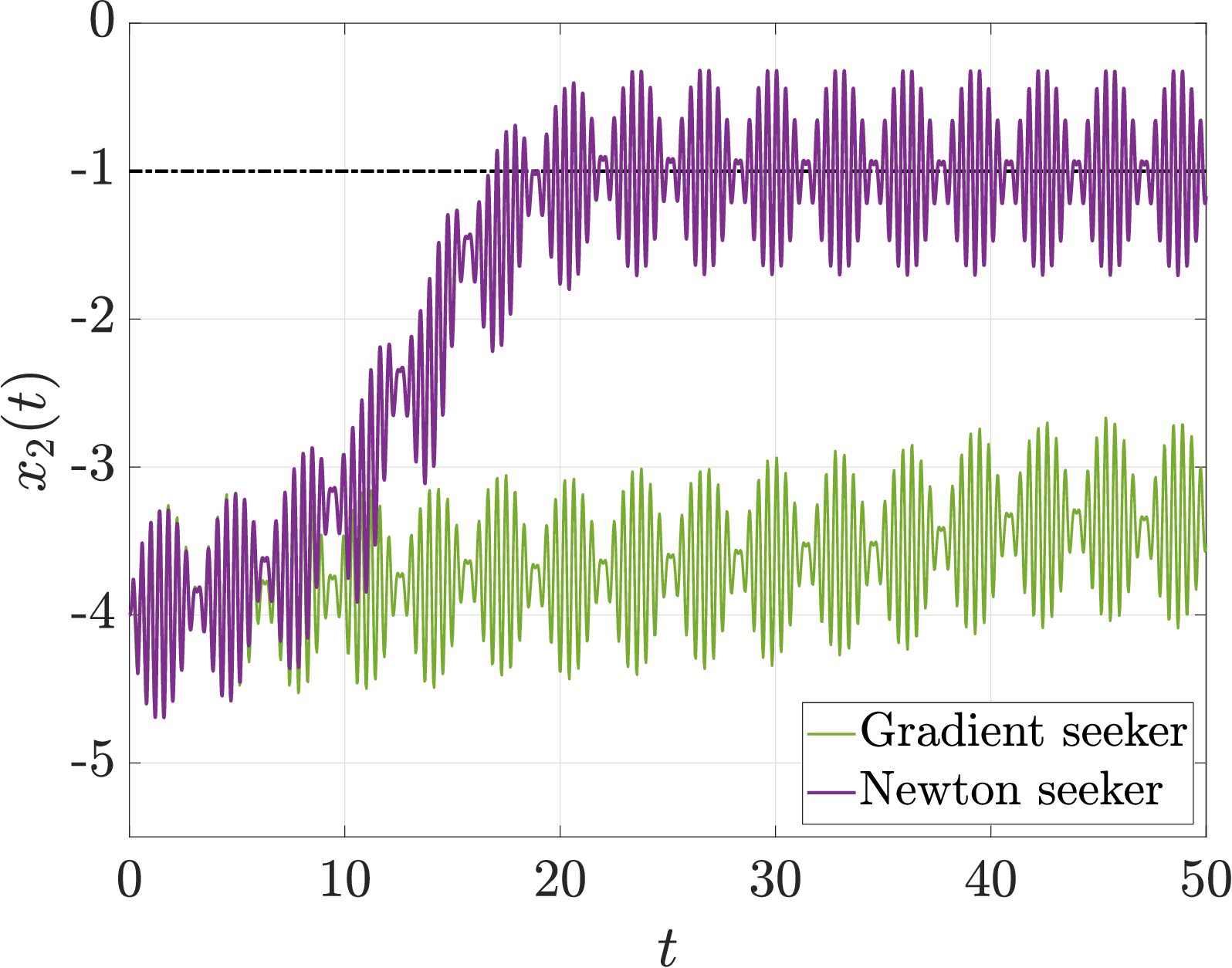}
						\caption{}
						\label{fig:x}
					\end{subfigure}

					\caption{The trajectories of both source seeking schemes, the gradient seeker \eqref{eq:gradient_seeker}  and Newton seeker \eqref{eq:newton_seeker} where: (a) $x_1$-trajectory of \eqref{eq:newton_seeker} converges to $x_1^* = 1$ and (b) $x_2$-trajectory of \eqref{eq:newton_seeker} converges to $x_2^* = -1$.}
					\label{fig:states_and_F}
				\end{figure}

				%\end{enumerate}
				%	\begin{figure}[t]
					%		\hspace*{-0.5cm}
					%		\includegraphics[trim = {2.5cm 0cm 0cm 1cm},clip,scale = 0.3]{images/nu.eps}
					%		\caption{The 'clipping' of $\nu(t-t_0)$ to $0.3$ at time $t = 0.7T$ shown left and the resulting time-varying gain \eqref{eq:kt} shown right. }
					%		\label{fig:nu}
					%	\end{figure} 

				%{\color{red}
					\section{Conclusion}
					
					While it may seem the most desirable to develop a Newton-based version of a source seeking algorithm with steering-based source seeking, such as, for example, \cite{ES_with_bounded_update_rates}, a naive Newton-based modification of this algorithm results in a second-order Lie bracket system %\eqref{eq:second_order_LBS_general} 
					that does not converge.
					Hence, we develop the first Newton-based source seeking algorithm based on the historically first nonholonomic source seeking scheme \cite{zhang2007source}, which employs forward velocity tuning. 
					Our convergence rate is independent from the Hessian and can be assigned by a proper choice of the gains. This is done by employing a Riccati filter  that, on the average, estimates the inverse of the Hessian of the distance-dependent field, and modifies the gradient algorithm \cite{zhang2007source}. 
					We prove semi-global practical stability properties and confirm our theoretical results with simulations.
					
					The need and value of second-order averaging in source seeking was recently revealed in the award-winning conference paper \cite{10156319} and is affirmed by the present paper. 
					%%%%%%%%%%%%%%%%%%%%%%%%%%%%%%%%%%%%%%%%%%%%%%%%%%%%%%%%%%%%%%%%%%%%%%%%%%%%%%%%

					%%%%%%%%%%%%%%%%%%%%%%%%%%%%%%%%%%%%%%%%%%%%%%%%%%%%%%%%%%%%%%%%%%%%%%%%%%%%%%%%

					%%%%%%%%%%%%%%%%%%%%%%%%%%%%%%%%%%%%%%%%%%%%%%%%%%%%%%%%%%%%%%%%%%%%%%%%%%%%%%%%

					\section*{Appendix}
					
					\label{appendix}
					
					\begin{proof}[Proof of Proposition \ref{prop:average_gradient_seeker}]
						For the derivation of \eqref{eq:LBS_gradient_seeker}, we first bring \eqref{eq:gradient_seeker_in_z} in the input-affine form \eqref{eq:general_nominal_system}. Let $c = \omega^{1-p}$ and $ \tilde{\alpha}= \alpha \omega^p$ where $p \in (0.5,1)$, then we can write \eqref{eq:gradient_seeker_in_z} as
						\begin{equation}
							\begin{aligned}[b]
								\begin{bmatrix}
									\dot{z} \\
									\dot{\nu}
								\end{bmatrix} = &\underbrace{\begin{bmatrix} J_0 z \\ h(F(z) - \nu) \end{bmatrix}}_{f_0} + \underbrace{\begin{bmatrix}
										0 \\ F(z)-\nu \\0
								\end{bmatrix}}_{f_1} \omega^{1-p} u_1(\omega t) \\ &+ \underbrace{\begin{bmatrix}
										0 \\ \alpha \\0
								\end{bmatrix}}_{f_2} \omega^p u_2(\omega t) 
							\end{aligned}
						\end{equation}
						with $u_1(\omega t) = \sin(\omega t)$, $u_2(\omega t) = \cos(\omega t)$, $p_1 = 1-p$ and $p_2 = p$. The derivation of \eqref{eq:LBS_gradient_seeker} follows from a direct calculation. With $T = \frac{2 \pi}{\omega}$ and \eqref{eq:field_in_z},  the first order terms are obtained as
						\begin{equation}
							\gamma_{12}(\omega) = \frac{\omega^2}{2 \pi} \int\limits_{0}^{\frac{2 \pi}{\omega}} \int\limits_{0}^{s} \cos(\omega s) \sin(\omega p) \text{d}p \text{d} s = -\frac{1}{2}
						\end{equation}
						and
						\begin{equation}
							[f_1,f_2] = \begin{bmatrix}
								0 \\
								\alpha H z_2 \\
								0
							\end{bmatrix}
						\end{equation}
						The second-order terms are obtained in similar fashion as
						{\setlength{\mathindent}{0pt}
							\begin{align}
								\begin{aligned}[b] \gamma_{121}(\omega) =
									\frac{\omega^{3-p}}{6 \pi} &\int\limits_{0}^{\frac{2\pi}{\omega}} \int\limits_{0}^{\tau}  \int\limits_{0}^{s} \sin ( \omega \tau)  ( \cos ( \omega s) \sin( \omega p)  \\ &- \sin ( \omega s) \cos( \omega p) ) \text{d} p \text{d} s \text{d} \tau   = \frac{1}{2\omega^p}
								\end{aligned}
							\end{align}
						}
						{\setlength{\mathindent}{0pt}
							\begin{align}
								\begin{aligned}[b] \gamma_{122}(\omega) =
									\frac{\omega^{2+p}}{6 \pi} &\int\limits_{0}^{\frac{2\pi}{\omega}} \int\limits_{0}^{\tau}  \int\limits_{0}^{s} \cos ( \omega \tau)  ( \cos( \omega s) \sin( \omega p)  \\ &- \sin ( \omega s) \cos( \omega p) ) \text{d} p \text{d} s \text{d} \tau   = 0
								\end{aligned}
							\end{align}
						}and
						\begin{equation}
							[[f_1,f_2],f_1] =  \begin{bmatrix}
								0 \\
								\alpha H (\nu - F(z)) - \alpha H^2 z_2^2 \\
								0
							\end{bmatrix}.
						\end{equation}
						Since $\gamma_{122}(\omega) = 0$, the value of $[[f_1,f_2],f_2]$ does not contribute in the expression for \eqref{eq:LBS_gradient_seeker} and thus we omit its computation.
						Finally, multiplying and adding the first and second-order terms from above as in \eqref{eq:second_order_LBS_general} with $l =  2$ and taking the limit as $\omega \rightarrow \infty$, we obtain \eqref{eq:LBS_gradient_seeker}.
					\end{proof}

					\begin{proof}[Proof of Proposition \ref{prop:average_newton_seeker}]
						\label{proof:average_newton_seeker}
						Following a similar approach to the proof of Proposition \ref{prop:average_gradient_seeker}, we can express \eqref{eq:newton_seeker_in_z} in the form of \eqref{eq:general_nominal_system}. By choosing $c = \omega^{1-p}$ and $\tilde{\alpha} = \alpha \omega^p$, where $p \in (0.5,1)$, we rewrite \eqref{eq:newton_seeker_in_z} as
						{\setlength{\mathindent}{0pt}
							\begin{equation}
								\begin{aligned}[b]
									&\begin{bmatrix}
										\dot{z} \\
										\dot{d} \\
										\dot{\nu}
									\end{bmatrix} = \underbrace{\begin{bmatrix} J_0 z \\ \omega_d  d \\ h(F(z) - \nu) \end{bmatrix}}_{f_0} + \underbrace{\begin{bmatrix}
											0 \\ d( F(z)-\nu) \\0 \\0
									\end{bmatrix}}_{f_1} \omega^{1-p} u_1(\omega t) \\ &+ \underbrace{\begin{bmatrix}
											0 \\ \alpha \\0 \\0
									\end{bmatrix}}_{f_2} \omega^p u_2(\omega t) + \underbrace{\begin{bmatrix}
											0 \\ 0\\ d^2 \frac{8\omega_d}{\alpha^2}(\nu - F(z))  \\0
									\end{bmatrix}}_{f_3} \omega^{2-2p} u_3(\omega t)
								\end{aligned}
							\end{equation}
						}with $u_1(\omega t) = \sin(\omega t)$, $u_2(\omega t) = \cos(\omega t)$, $u_3(\omega t) = \cos(2 \omega t)$, , $p_1 = 1-p$, $p_2 = p$ and $p_3 = 2-2p$.
						With $T = \frac{2 \pi}{\omega}$ and \eqref{eq:field_in_z}, the first order terms are obtained as
						\begin{align}
							\gamma_{12}(\omega) &= \frac{\omega^2}{2 \pi} \int\limits_{0}^{\frac{2 \pi}{\omega}} \int\limits_{0}^{s} \cos(\omega s) \sin(\omega p) \text{d}p \text{d} s = -\frac{1}{2} \\
							\gamma_{13}(\omega) &= \frac{\omega^{3-3p}}{2 \pi} \int\limits_{0}^{\frac{2 \pi}{\omega}} \int\limits_{0}^{s} \cos(2 \omega s) \sin(\omega p) \text{d}p \text{d} s = 0 \\
							\gamma_{23}(\omega) &= \frac{\omega^{3-p}}{2 \pi} \int\limits_{0}^{\frac{2 \pi}{\omega}} \int\limits_{0}^{s} \cos(2 \omega s) \cos(\omega p) \text{d}p \text{d} s = 0
						\end{align}
						and 
						\begin{align}
							[f_1,f_2] &= \begin{bmatrix}
								0 
								&d \alpha H  z_2 
								&0 
								&0
							\end{bmatrix}^\top\\
							[f_1,f_3] &= \begin{bmatrix}
								0 \\ - d^2 \frac{8 \omega_d ((F(z) - F^*)^2 - \nu)}{ \alpha^2}\\ d^3 \frac{8 H \omega_d z_2 (F(z) - \nu) }{\alpha^2} \\0
							\end{bmatrix} \\
							[f_2,f_3] &= \begin{bmatrix}
								0 
								&0 
								&d^2 H \frac{8\omega_d}{\alpha}  z_2 
								&0
							\end{bmatrix}^\top   
						\end{align}
						Analogously, the second-order terms are computed as
						{\setlength{\mathindent}{0pt}
							\begin{align}
								\begin{aligned}[b] \gamma_{121}(\omega) =
									\frac{\omega^{3-p}}{6 \pi} &\int\limits_{0}^{\frac{2\pi}{\omega}} \int\limits_{0}^{\tau}  \int\limits_{0}^{s} \sin ( \omega \tau)  ( \cos ( \omega s) \sin( \omega p)  \\ &- \sin ( \omega s) \cos( \omega p) ) \text{d} p \text{d} s \text{d} \tau   = \frac{1}{2\omega^p}
								\end{aligned}
							\end{align}
							\begin{align}
								\begin{aligned}[b] \gamma_{122}(\omega) =
									\frac{\omega^{2+p}}{6 \pi} &\int\limits_{0}^{\frac{2\pi}{\omega}} \int\limits_{0}^{\tau}  \int\limits_{0}^{s} \cos ( \omega \tau)  ( \cos ( \omega s) \sin( \omega p)  \\ &- \sin ( \omega s) \cos( \omega p) ) \text{d} p \text{d} s \text{d} \tau   = 0
								\end{aligned}
							\end{align}
							\begin{align}
								\begin{aligned}[b] \gamma_{123}(\omega) =
									\frac{\omega^{4-2p}}{6 \pi} &\int\limits_{0}^{\frac{2\pi}{\omega}} \int\limits_{0}^{\tau}  \int\limits_{0}^{s} \cos ( 2\omega \tau)  ( \cos ( \omega s) \sin( \omega p)  \\ &- \sin ( \omega s) \cos( \omega p) ) \text{d} p \text{d} s \text{d} \tau   = 0
								\end{aligned}
							\end{align}
							\begin{align}
								\begin{aligned}[b] \gamma_{131}(\omega) =
									&\frac{\omega^{5-4p}}{6 \pi} \int\limits_{0}^{\frac{2\pi}{\omega}} \int\limits_{0}^{\tau}  \int\limits_{0}^{s} \sin ( \omega \tau)  ( \cos ( 2 \omega s) \sin( \omega p)  \\ &- \sin ( \omega s) \cos( 2 \omega p) ) \text{d} p \text{d} s \text{d} \tau   = -\frac{1}{8\omega^{4p - 2}}
								\end{aligned}
							\end{align}
							\begin{align}
								\begin{aligned}[b] \gamma_{132}(\omega) =
									\frac{\omega^{4-2p}}{6 \pi} &\int\limits_{0}^{\frac{2\pi}{\omega}} \int\limits_{0}^{\tau}  \int\limits_{0}^{s} \cos ( \omega \tau)  ( \cos (2 \omega s) \sin( \omega p)  \\ &- \sin ( \omega s) \cos( 2 \omega p) ) \text{d} p \text{d} s \text{d} \tau   = 0
								\end{aligned}
							\end{align}
							\begin{align}
								\begin{aligned}[b] \gamma_{133}(\omega) =
									\frac{\omega^{6-5p}}{6 \pi} &\int\limits_{0}^{\frac{2\pi}{\omega}} \int\limits_{0}^{\tau}  \int\limits_{0}^{s} \cos ( 2\omega \tau)  ( \cos (2 \omega s) \sin( \omega p)  \\ &- \sin ( \omega s) \cos(2 \omega p) ) \text{d} p \text{d} s \text{d} \tau   = 0
								\end{aligned}
							\end{align}
							\begin{align}
								\begin{aligned}[b] \gamma_{231}(\omega) =
									&\frac{\omega^{4-2p}}{6 \pi} \int\limits_{0}^{\frac{2\pi}{\omega}} \int\limits_{0}^{\tau}  \int\limits_{0}^{s} \sin ( \omega \tau)  ( \cos ( 2 \omega s) \cos( \omega p)  \\ &- \cos( \omega s) \cos( 2 \omega p) ) \text{d} p \text{d} s \text{d} \tau   = 0
								\end{aligned}
							\end{align}
							\begin{align}
								\begin{aligned}[b] \gamma_{232}(\omega) =
									\frac{\omega^{3}}{6 \pi} &\int\limits_{0}^{\frac{2\pi}{\omega}} \int\limits_{0}^{\tau}  \int\limits_{0}^{s} \cos ( \omega \tau)  ( \cos (2 \omega s) \cos( \omega p)  \\ &- \cos ( \omega s) \cos( 2 \omega p) ) \text{d} p \text{d} s \text{d} \tau   = \frac{1}{8}
								\end{aligned}
							\end{align}
							\begin{align}
								\begin{aligned}[b] \gamma_{233}(\omega) =
									\frac{\omega^{5-3p}}{6 \pi} &\int\limits_{0}^{\frac{2\pi}{\omega}} \int\limits_{0}^{\tau}  \int\limits_{0}^{s} \cos ( 2\omega \tau)  ( \cos (2 \omega s) \cos( \omega p)  \\ &- \cos ( \omega s) \cos(2 \omega p) ) \text{d} p \text{d} s \text{d} \tau   = 0
								\end{aligned}
							\end{align}
						}As previously, we  skip the computation of the second-order terms $[[f_i,f_j], f_m]$ where their corresponding $\gamma_{ijm}(\omega) = 0$, as they do not contribute to the calculation of \eqref{eq:LBS_newton_seeker}. For instance, there is no need to compute $[[f_1,f_2], f_2]$ since $\gamma_{122}(\omega) = 0$. Therefore, the relevant second-order Lie brackets are 
						{\setlength{\mathindent}{0pt}
							\begin{align}
								[[f_1,f_2], f_1] &= \begin{bmatrix}
									0 \\
									H \alpha d^2 (\nu - F(z)) - \alpha (H d z_2)^2 \\
									0 \\
									0
								\end{bmatrix}\\
								[[f_1,f_3], f_1] &= \begin{bmatrix}
									0 \\ 
									- \frac{16 H d^3 \omega_d z_2}{\alpha^2}(F(z) - \nu)\\
									\frac{8 H d^4 (F(z) - \nu) }{\alpha^2}  \left(H z_2^2 - (F(z) - \nu)\right) \\ 
									\\0
								\end{bmatrix} \\
								[[f_2, f_3], f_2] &= \begin{bmatrix}
									0 \\
									0 \\
									-8 \omega_d H  d^2\\
									0
								\end{bmatrix}   
							\end{align}
						}By multiplying and adding the first and second-order terms  above, as described in \eqref{eq:second_order_LBS_general} with $l = 3$, and taking the limit as $\omega$ approaches infinity, we arrive at  \eqref{eq:LBS_newton_seeker}.

					\end{proof}

					\bibliography{references}

				\end{document}